\newtheorem{proposition}{Proposition}
\newtheorem{lemma}{Lemma}
\title{Travelling waves modulated by subthreshold oscillations in networks of integrate-and-fire neurons}
\author{Henry D. J. Kerr$\,^{1,3}$, Peter Ashwin$\,^{2,3}$, Kyle C.~A.~Wedgwood$\,^{1,2,3}$}
\date{\small
$\,^1$ Living Systems Institute, University of Exeter, Exeter, EX4 4QD, United Kingdom\\
$\,^2$ EPSRC Hub for Quantitative Modelling and Healthcare, University of Exeter, Exeter, EX4 4QD, United Kingdom\\
$\,^3$ Department of Mathematics and Statistics, Harrison Building, University of Exeter, North Park Road, Exeter, EX4 4QF, United Kingdom}
\begin{document}

\maketitle

\begin{abstract}
\noindent Travelling waves of neural firing activity are observed in brain tissue as a part of various sensory, motor and cognitive processes.  They represent an object of major interest in the study of excitable networks, with analysis conducted in both neural field models and spiking neuronal networks.  The latter class exposes the single-neuron dynamics directly, allowing us to study the details of their influence upon network-scale behaviour.  
Here we present a study of a laterally-inhibited network of leaky integrate-and-fire neurons modulated by a slow voltage-gated ion channel that acts as a linear adaptation variable.  As the strength of the ion channel increases, we find that its interaction with the lateral inhibition increases wave speeds.  The ion channel can enable subthreshold oscillations, with the intervals between the firing events of loosely-coupled travelling wave solutions structured around the neuron's natural period.  These subthreshold oscillations also enable the occurrence of codimension-2 grazing bifurcations; along with the emergence of fold bifurcations along wave solution branches, the slow ion channel introduces a variety of intermediate structures in the solution space.  
These point towards further investigation of the role neighbouring solution branches play in the behaviour of waves forced across bifurcations, which we illustrate with the aid of simulations using a novel root-finding algorithm designed to handle uncertainty over the existence of firing solutions.  
\end{abstract}

\section{Introduction}
Understanding collective behaviour in excitable networks, such as networks of neurons or cardiac cells, is a central question in network dynamics.
In neuroscience, collective behaviour in the form of travelling waves is believed to underpin computation in the brain across a variety of scales, with waves being observed in a range of sensory, motor and cognitive systems~\cite{Muller2018}.
These travelling waves have attracted much interest from the modelling community, with seminal research by Amari~\cite{Amari1977}, Ermentrout~\cite{ermentroutNeuralNetworksSpatiotemporal1998}, Bressloff~\cite{bressloffWavesNeuralMedia2014}, and Coombes and others~\cite{NeuralFields} providing techniques for establishing the existence and stability of travelling waves in networks of neural populations coupled in a nonlocal fashion through long-range synapses.
These results have paved the way for understanding how different facets of synaptic communication, such as transmission delays, recurrent inhibition, and anisotropy, impact the propagation of neural activity across cortical tissue.

Many studies of travelling waves in neuronal networks have employed an activity- or rate-based approach such as that used by \textcite{Amari1977}, in which the models abstractly describe the average firing rate of the local population of neurons at each point in the network.  This simplifies the local dynamics of the model, greatly facilitating analysis and in some cases admitting closed-form solutions for travelling waves.  However, such simplification leaves out the processing that occurs in small circuits of neurons or even within single neurons, such as resonant responses that suppress or promote inputs with different frequencies~\cite{Stark2022}.  At the other end of the complexity scale, models that follow the Hodgkin--Huxley formalism \cite{Hodgkin1952e} explicitly describe the various ion channel dynamics that generate action potentials (spikes), sacrificing analytical tractability in favour of providing a detailed model of the underlying biological processes.

Between these two extremes we find the integrate-and-fire (IF) models, which explicitly describe firing events using a hard reset condition but forgo the prescription of specific ion channel kinetics in favour of simpler and more abstract representations.
At the single cell level, IF models can capture complex firing patterns such as bursting dynamics~\cite{Izhikevich2001} and can be fit to experimental datasets with high precision~\cite{Badel2008a}.
IF models thereby present a useful bridge between rate-based models and more complicated spiking models.
Travelling wave solutions are also found in networks of IF models incorporating the same nonlocal synaptic coupling as in the Amari-style neural field models.

The simplicity of IF models often facilitates the construction of semi-explicit solutions that are written in terms of the times at which neurons in the population fire.
This distinction enables exploration of the contribution of individual firing events to overall collective behaviour.
One key example revolves around so-called ``bump attractors'', spatially localised solutions of heightened firing activity that are thought to underpin working memory~\cite{Wimmer2014}.
In Amari-style networks, bump attractors appear as time-invariant profiles, whereas in IF networks these are represented by solutions undergoing spatiotemporal chaos or \textit{chimeras}~\cite{Omelchenko2024}.
Importantly, this chaotic nature is not derived from a finite-size effect and instead occurs due to the network trajectories in the solution space moving between different travelling wave solutions, each of which is of saddle-type~\cite{Avitabile2023}.
This rich dynamical behaviour further highlights the importance of understanding travelling wave dynamics for probing neural computation.

Subthreshold oscillations and resonances in local dynamics can have a significant impact in travelling wave behaviour, such as through direct correlations between resonant frequencies and wave propagation speed~\cite{shanTheoreticalExperimentalStudy2025}.
The interplay between resonance and travelling wave behaviour has been hypothesised to play a key role in auditory processing in the cochlea~\cite{bellHearingTravellingWave2004} with later studies providing theoretical~\cite{bellResonanceApproachCochlear2012} and experimental~\cite{nankaliInterplayTravelingWave2022} support for this.
Correlation between resonances and wave propagation is also observed in other excitable tissues, such as cardiac tissue, in which local frequency tuning determines whether waves are initiated and terminated~\cite{tepleninAtypicalCollectiveOscillatory2025}.
At larger scales, nonlinear resonances in coupling have been shown to promote the emergence of complex mixed-mode propagating oscillations in large scale brain models~\cite{galinskyUniversalTheoryBrain2020}.

In this article, we examine how the incorporation of subthreshold oscillations in a network of integrate-and-fire neurons coupled synaptically with lateral inhibition impacts the speeds and bifurcation structures of travelling waves supported by the network.
Incorporation of this subthreshold oscillation is achieved by adding a second state variable with linear dynamics, representing the action of a voltage-gated ion channel such as an HCN \cite{Nolan2007} or Kv1 channel \cite{Sciamanna2011}.  This variable is akin to an adaptation variable in neural field models~\cite{ermentroutSpatiotemporalPatternFormation2014}.
This choice was inspired by studies in isolated neuron models that quantified the mechanisms generating resonances at local scales~\cite{Richardson2003, Rotstein2014}, and multi-scale studies that demonstrated that resonances at the network level can be generated across multiple neural scales independently of one another~\cite{Stark2022}.
Restricting the new dynamics to be linear allows us build upon our previous results on networks with one-dimensional local dynamics~\cite{Avitabile2023} to provide semi-explicit equations for the construction and stability of travelling waves.
We leverage these results to explore wave solutions as model parameters are varied, checking our results against numerical simulations using a novel event-based algorithm that makes use of root-finding rather than numerical integration to significantly reduce computational cost and ensure that all firing events are captured accurately.

The remainder of the manuscript is organised as follows.
In Sect.~\ref{sec:model}, we describe our model and review the relevant properties of the local model.
In Sect.~\ref{sec:numerics}, we briefly describe our approach for efficient and accurate simulation of our network model, with full algorithmic details provided in the Appendix.
In Sect.~\ref{sec:tw_construction}, we construct travelling waves and their stability.
In Sect.~\ref{sec:results}, we investigate how wave solution structures, speeds, stabilities and bifurcations vary as model parameters are changed through computation of bifurcation diagrams and numerical simulation.
We end in Sect.~\ref{sec:discussion} with concluding remarks and suggestions for future work.

\section{The neuronal network model}
\label{sec:model}

\noindent 
Taking inspiration from~\cite{Laing2001,Avitabile2023}, our model comprises $N$  homogeneous neurons equispaced over a ring domain $\mathbb{T} = \mathbb{R} / 2L\mathbb{Z}$ with spatial period $2L$.
The individual neurons are positioned at $x_n = -L + \Delta_xn$, for $n=1,2,\dots,N$, where $\Delta_x = 2L/N$.
The local dynamics for each neuron is prescribed by a variant of the leaky integrate-and-fire model presented in~\cite{Richardson2003, Rotstein2014} that, in addition to the voltage variable $v$, incorporates a variable $u$ representing the linearised activity of a voltage-gated ion channel.
The inclusion of a second state variable facilitates inclusion of subthreshold oscillatory dynamics, be it sustained or decaying. 
Such subthreshold dynamics has recently been shown to facilitate phase-locking to external stimuli, which may in turn be relevant to the oscillatory interference models used to describe spatial navigation~\cite{makarenkovBifurcationSpikingOscillations2025}.
The local dynamics is completed through the addition of a synaptic buffer variable $s$, resulting a system of three ODEs,
\begin{align}
    \frac{dv_n}{dt} &= I - v_n - u_n + s_n - (v_\text{th} - v_\text{r}) \sum_{k \in \mathbb{Z}} \delta (t - t_{n,k} ), \label{eq:v_discrete} \\
    \frac{du_n}{dt} &= Rv_n - Du_n, \label{eq:u_discrete} \\
    \frac{ds_n}{dt} &= -\beta s_n + \beta f^\text{in}_n(t), \label{eq:s_discrete}
\end{align}
for $n = 1, 2, \dots N$, and where $\delta$ is the Dirac delta function.
When $v_n$ reaches the threshold value $v_\text{th}$, the neuron `fires' and $v_n$ is instantaneously set to the reset value $v_\text{r} < v_\text{th}$. Throughout this work we shall take $v_\text{th} = 1$, $v_\text{r} = 0$.
We denote the time at which the $n$th neuron fires for the $k$th time by
\begin{equation}
    t_{n,k} := \inf\{ t \in \mathbb{R} \mid v_n(t) = v_\text{th}, \, t > t_{n,k-1}\}.
    \label{eq:firing_time}
\end{equation}
The secondary variable $u_n$ models the current flow through a voltage-gated ion channel, which may be linearised from conductance-based models \cite{Richardson2003}.  Increases in $v_n$ drive increases in $u_n$, which subsequently depresses $v_n$, modelling either a hyperpolarisation-activated inward flow of charge such as an HCN channel \cite{Nolan2007} or a depolarisation-activated outward flow of charge such as a Kv1 channel \cite{Sciamanna2011}.  We parametrise the dynamics of $u_n$ by a response rate $R \geq 0$ to changes in $v_n$ and a decay rate $D > 0$.  This varies from the more common conductance parametrisation of the linear channel \cite{Richardson2003, Brunel2003, Rotstein2014} to better enable visual display of solutions, retain comparability with prior studies of travelling waves and bumps in LIF neurons \cite{Avitabile2023, Laing2001}, and relate the parameter spaces of natural oscillation and resonant response (see Figure \ref{fig:stability}).

The synaptic variable $s_n$ models the response of neuron $n$ to synaptic inputs from the rest of the neuronal population, with the parameter $\beta > 0$ controlling its timescale.  More precisely, \eqref{eq:s_discrete} describes a normalised first-order temporal filter for incoming inputs (represented by $f_n^\text{in}$), wherein $s_n$ responds to inputs by first jumping instantaneously by an amount proportional to $\beta$ before exponentially decaying to rest with rate parameter $\beta$.
This filter gives a physiologically agreeable form to the response of $v$ to synaptic inputs while allowing $s_n$ to act as an input buffer in place of tracking the timecourse of individual inputs during simulation.
Other models for temporal filters, including those with higher order responses can be found in \cite{Ermentrout1998}.

The parameter $I$ represents an applied current and sets the excitability of all neurons.  In the absence of synaptic inputs, i.e., with $f^{\text{in}}_n(t)=0$, the neuron settles to the steady state 
\begin{equation}
    v_\text{rest} = \frac{DI}{R + D}, \quad u_\text{rest} = \frac{RI}{R + D}, \quad s_\text{rest} = 0.
\end{equation} 
We will often wish to treat $v_\text{rest}$ as an independent parameter to be declared while varying $R$ and $D$, in which case we can redefine $I := ((R + D)/D)v_\text{rest}$.

\subsection{Synaptic connectivity}
Connections between neurons in the network are modelled using a distance-dependent all-to-all coupling rule.
This is achieved by setting the input function $f^\text{in}_n$ to
\begin{equation}
    f^\text{in}_n(t) = \Delta_x \sum_{k \in \mathbb{Z}} \sum_{m \neq n}  w\big(|{x}_n - {x}_m|\big) \delta \big(t - t_{m,k}\big),
\label{eq:discrete-input}
\end{equation}
recalling that $\Delta_x = 2L/N$ is the reciprocal of the neuron density, i.e.\ the free space around each neuron.
The distance-based connectivity kernel $w : \mathbb{T} \to \mathbb{R}$ is defined as a difference of Gaussians,
\begin{equation}
    w(d) = \frac{A}{ a\sqrt{2\pi} } e^{-\frac{d^2}{2a^2}}
    - \frac{B}{ b\sqrt{2\pi} } e^{-\frac{d^2}{2b^2}}.
\label{eq:mexican-hat}
\end{equation}
For $L \gg b > a > 0$ and $A, B > 0$, \eqref{eq:mexican-hat} describes a Mexican hat function in which nearby neurons excite one another (positive connection weights), neurons separated by an intermediate distance inhibit one another (negative connection weights), and neurons at longer distances have negligible interaction (weights approach zero supraexponentially as $d \to \pm \infty$).  This models central excitation with lateral or surround inhibition, a pattern of functional connectivity that is observed in sensory cortices \cite{DelRosario2025, Studer2022, Shmuel2019} and can be retrieved as a general simplification of functional connectivity in models of excitatory and inhibitory subpopulations wherein the inhibitory-to-inhibitory input is negligible and the excitatory-inhibitory-excitatory relay has a longer average length than the direct excitatory-to-excitatory connection \cite{Kang2003, Wilson1973}.  We take $A = B$ to balance excitation and inhibition across the network such that $\int_\mathbb{R}w(x)dx = 0$.  For $L \gg b$ the connectivity range is shorter than the length of the ring, meaning the effects of the leading and trailing inhibition do not overlap and we may approximate the ring domain $\mathbb{T}$ with $\mathbb{R}$.

\begin{figure}
    \centering
    \includegraphics[width=0.5\textwidth]{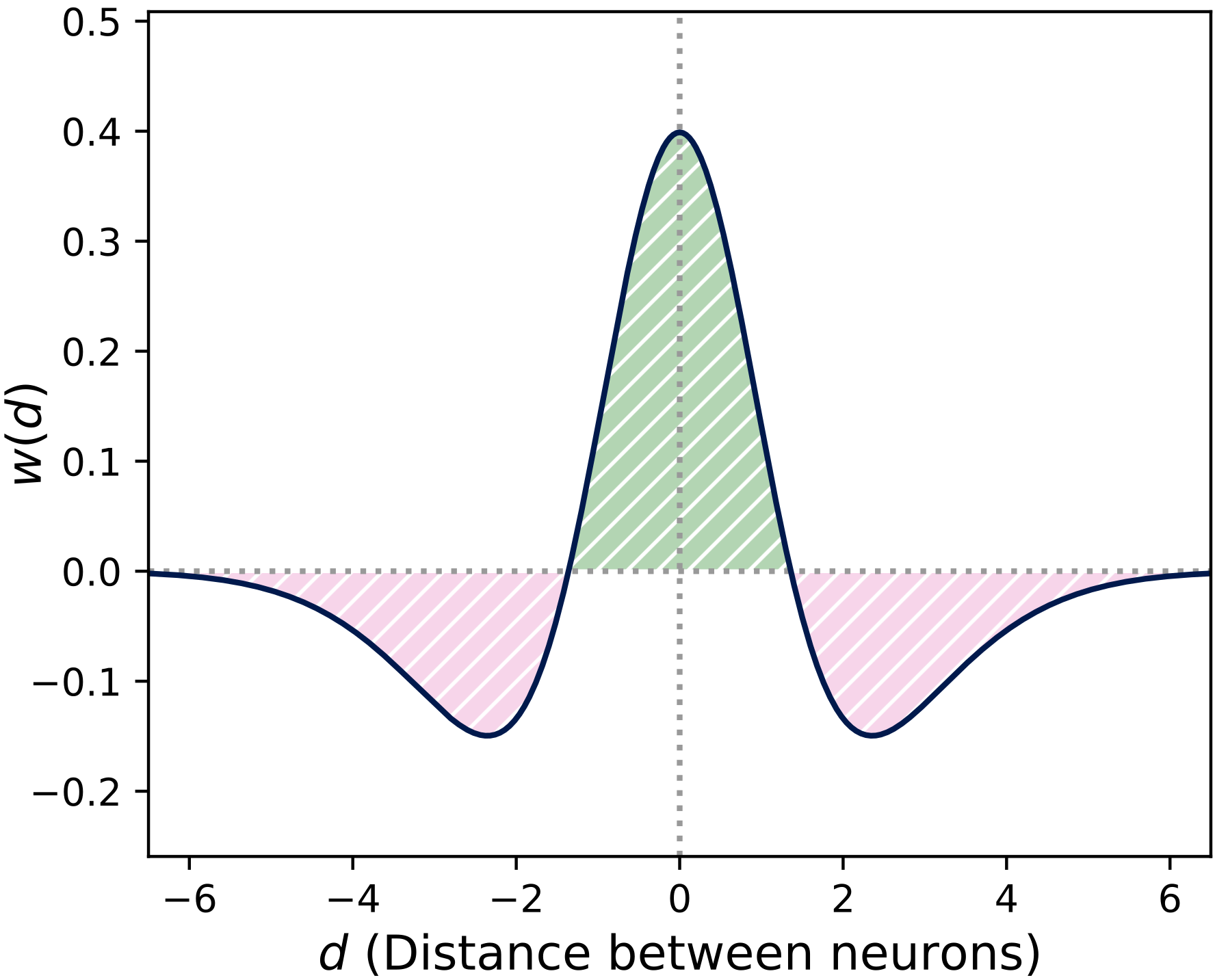}
    \caption{Mexican Hat function \eqref{eq:mexican-hat} for our selected parameters $A = B = 2$, $a = 1$, $b = 2$.}
    \label{fig:mexican-hat}
\end{figure}

\subsection{Model solutions}

The linearity of the single-neuron model \eqref{eq:v_discrete}--\eqref{eq:s_discrete}
permits the construction of explicit solutions. Introducing the vector $\mathbf{v}_n = (v_n, u_n)^\top$, we can rewrite the system of equations \eqref{eq:v_discrete}--\eqref{eq:s_discrete} in the form
\begin{equation}
    \dot{\mathbf{v}}_n = \mathbf{M} \mathbf{v}_n + \mathbf{b}_n,
    \label{eq:vu_matvec}
\end{equation}
where
\begin{equation}
    \mathbf{M} = 
    \begin{pmatrix}  -1 & -1 \\ R & -D \end{pmatrix}, \quad \mathbf{b}_n = \begin{pmatrix}
        I + s_n - (v_\text{th} - v_\text{r}) \sum_{n=1}^\infty \delta (t - t_{k,n})\\
        0
    \end{pmatrix}.
\end{equation}
Here we see that $s_n(t)$ acts as a non-autonomous forcing variable for the $(v_n,u_n)$ subsystem.

System \eqref{eq:vu_matvec} can be solved via variation of constants to give
\begin{equation}
    \textbf{v}_n = e^{t\textbf{M}}  \Bigg( \mathbf{v}_n(0) + 
    \int_0^t e^{-t'\textbf{M}} \begin{pmatrix} 1 \\ 0 \end{pmatrix} 
    \Big( I + s_n(t') - (v_\text{th} - v_\text{r}) \sum_{n=1}^\infty \delta (t' - t_n) \Big) dt'
     \Bigg),
\end{equation}
where $e^{t \mathbf{M}}$ is the matrix exponential of $t \mathbf{M}$.
Matrix $\mathbf{M}$ has eigenvalues
$\lambda_{1} = -p - q$ and $\lambda_2 = -p + q$, where
\begin{equation}
    p = \frac{1}{2}(D + 1), \quad
    q = \frac{1}{2}\sqrt{(D - 1)^2 - 4R}.
    \label{eq:eigenvalues}
\end{equation}

\begin{figure}
    \centering
    \includegraphics[width=\linewidth]{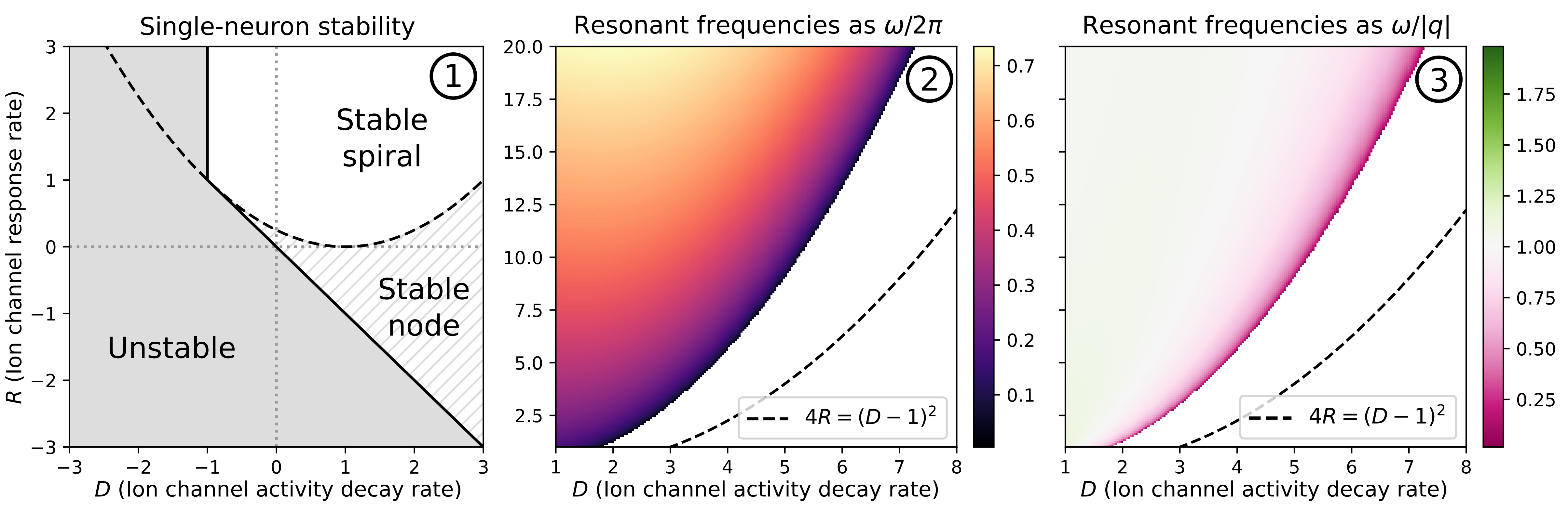}
    \caption{1: Stability diagram for the fixed point of the single-neuron model in $(R, D)$ parameter space.  The dashed parabola represents the relationship $4R = (D - 1)^2$, splitting the parameter space into regions in which solutions to \eqref{eq:vu_matvec} are trigonometric (above) or hyperbolic (below).  2: Resonant frequencies of neurons for values of $R$ and $D$; white regions do not show a resonant response.  3: Resonant frequencies as a proportion of the neuron's natural frequency $|q|/2\pi$.}
    \label{fig:stability}
\end{figure}

In the absence of forcing through $s(t)$, the system has a fixed point, for which the stability is shown in Figure \ref{fig:stability}.1.
The fixed point is stable for $D > -1$ and $R + D > 0$, and is a stable spiral if $4R > (D - 1)^2$, and a stable node otherwise.
We wish to focus on the case in which solutions have a decaying oscillatory component, so we shall restrict ourself to the case $4R > (D - 1)^2$.  We note that despite our reparametrisation the stability diagram is identical to that of prior work \cite{Richardson2003, Rotstein2014}, as at rest our voltage and current equations have the same form as their current and voltage equations respectively.
  
Increasing $R$ within the oscillatory case only increases the imaginary portion of our eigenvalues $q$, increasing the natural frequency of our neuron without increasing the rate of exponential convergence towards the steady state.  As the oscillatory regime has a quadratic boundary, increasing the ion channel timescale by increasing $R$ and $D$ while holding $R/D$ constant will eventually abolish natural oscillation.

\subsection{Solutions between firing events}
Between firing events the coupling term is $f_n^\text{in} = 0$, so we may treat the neurons as being isolated.
Without loss of generality, we can take our initial condition as being at time $t = 0$, and integrate \eqref{eq:s_discrete} to find $s_n(t) = s_n(0)e^{-\beta t}$.
In the case that $q$ is imaginary, we can then solve system \eqref{eq:vu_matvec} to find
\begin{equation}
\begin{split}
    v_n(t) = 
    \ &e^{-pt} \cos(|q|t) \bigg( v_n(0)
     - s_n(0) \frac{2p - (\beta + 1)}{(p - \beta)^2 - q^2}
     - I \frac{2p - 1}{p^2 - q^2} \bigg) \\
    - &e^{-pt} \frac{1}{|q|} \sin(|q|t) \bigg(
    s_n(0) \frac{p^2 + q^2 -p(\beta + 1) + \beta}{(p - \beta)^2 - q^2}
     + I \frac{p^2 + q^2 - p}{p^2 - q^2}
     + v_n(0)(1 - p) + u_n(0) \bigg) \\
    + &e^{-\beta t} s_n(0) 
    \frac{2p - (\beta + 1)}{(p - \beta)^2 - q^2}     + I \frac{2p - 1}{p^2 - q^2},
\end{split}
\label{eq:v-no-input}
\end{equation}
and
\begin{equation}
    \begin{split}
    u_n(t) = 
    \ &e^{-pt} \cos(|q|t) \bigg(u_n(0) - R \bigg(
    \frac{s_n(0)}{(p - \beta)^2 - q^2} + \frac{I}{p^2 - q^2}\bigg) \bigg) \\
    -& e^{-pt} \frac{1}{|q|} \sin(|q|t) \bigg( R \bigg(
    s_n(0) \frac{p + \beta}{(p - \beta)^2 - q^2} + I \frac{p}{p^2 - q^2} - v_n(0)\bigg)
     + (p - 1) u_n(0) \bigg) \\
    + &e^{-\beta t} \frac{Rs_n(0)}{(p - \beta)^2 - q^2} 
    + \frac{RI}{p^2 - q^2}.
    \end{split}
    \label{eq:u-no-input}
\end{equation}
These equations may be simplified further for practical manipulation using lumped parameters, as discussed in Appendix \ref{app:vu-noinput}.  Note that \eqref{eq:v-no-input}--\eqref{eq:u-no-input} only specify the solutions to \eqref{eq:v_discrete}--\eqref{eq:u_discrete} between firing events; full solutions to the system must be obtained by finding the ordered sequence of global firing times $\{t_{{n_j},{k_j}}\}_{j\in\mathbb{Z}}$ (where $j$ indexes over all firing events of all neurons) and resolving the reset events associated with each such event (in order):
\begin{equation}
    v_{n_j}\big(t_{n_j,{k_j}}^+\big) = v_\text{r}, \quad
    s_{m \neq n_j}\big(t_{n_j,{k_j}}^+ \big) = 
    s_{m}\big(t_{n_j,k_j}^- \big) + \beta\Delta w\big(|x_{n_j}-x_{m}|\big),
\end{equation}
where $g(t^\pm) = \lim_{\epsilon \to 0} g(t \pm \epsilon)$ for some function $g$.
The firing times themselves must be found by solving the nonlinear equation $v_n(t_{{n_j},{k_j}}^-) = v_\text{th}$.

\subsection{Subthreshold resonance}
For certain values of $R$ and $D$, our neuron model demonstrates a substhreshold resonant response, as discussed in more detail in~\cite{Richardson2003, Brunel2003, Rotstein2014}. Briefly, for a sinusoidally-forced reduced model of form
\begin{align}
    \frac{dv}{dt} &= -v -u + \sin(\omega t), \label{eq:forced_v} \\
    \frac{du}{dt} &= Rv - Du, \label{eq:forced_u}
\end{align}
that neglects fire-and-reset mechanics, $v$ has solutions of form $v(t) = \alpha(\omega) \sin(\omega t - \theta(\omega))$.  The system is defined as exhibiting resonance when the amplitude $\alpha$ has a maximum at a forcing frequency $\omega^*/2\pi \neq 0$, called the resonant frequency.
The resonant frequency of our neuron for different values of $R$ and $D$ is plotted in Figure \ref{fig:stability}.2, and compared to the natural frequency of the neuron $|q|/2\pi$ in Figure \ref{fig:stability}.3.  The unplotted (white) regions are areas in which no resonant response occurs.  We see that the resonant region is limited to within the oscillatory regime, with the resonant frequency increases in $R$ and decreases in $D$, and holding a similar value to the natural frequency except near the boundary of the resonant region.

\subsection{Delayed response to transient forcing}
\label{sec:transient}
\noindent As we intend to examine travelling wave solutions, we can prime our intuitions by considering the behaviour of a single neuron as it is exposed to input resembling that of an approaching wave.
We consider a single neuron similar to \eqref{eq:forced_v}--\eqref{eq:forced_u}, differing in that the forcing applied is a transient negative-then-positive input, corresponding to the lateral inhibition followed by local excitation expected from our Mexican hat coupling \eqref{eq:mexican-hat}.  We use the system
\begin{align}
    \frac{dv}{dt} &= -v -u + s(t), \label{eq:trans-v}\\
    \frac{du}{dt} &= Rv - Du, \label{eq:trans-u}\\
    s(t) &= \frac{c}{\sqrt{2\pi}}e^{-\frac{c^2}{2}(t - \tau - 1/c)^2} 
    - \frac{c}{\sqrt{2\pi}}e^{-\frac{c^2}{2}(t - \tau + 1/c)^2}, \label{eq:trans-i}
\end{align}
for some $\tau \gg {1}/{c}$ and initial conditions $\left(v(0),u(0)\right) = \left(0,0\right)$.  The forcing term $s(t)$ is negative until time $t = \tau$, at which point it becomes positive, mimicking the transition from inhibition to excitation as travelling wave of speed $c$ approaches the position of a neuron.  A faster wave gives a smaller Gaussian variance ${1}/{c^2}$, compressing the same input over a shorter time interval.

\begin{figure}
    \centering
    \includegraphics[width=\linewidth]{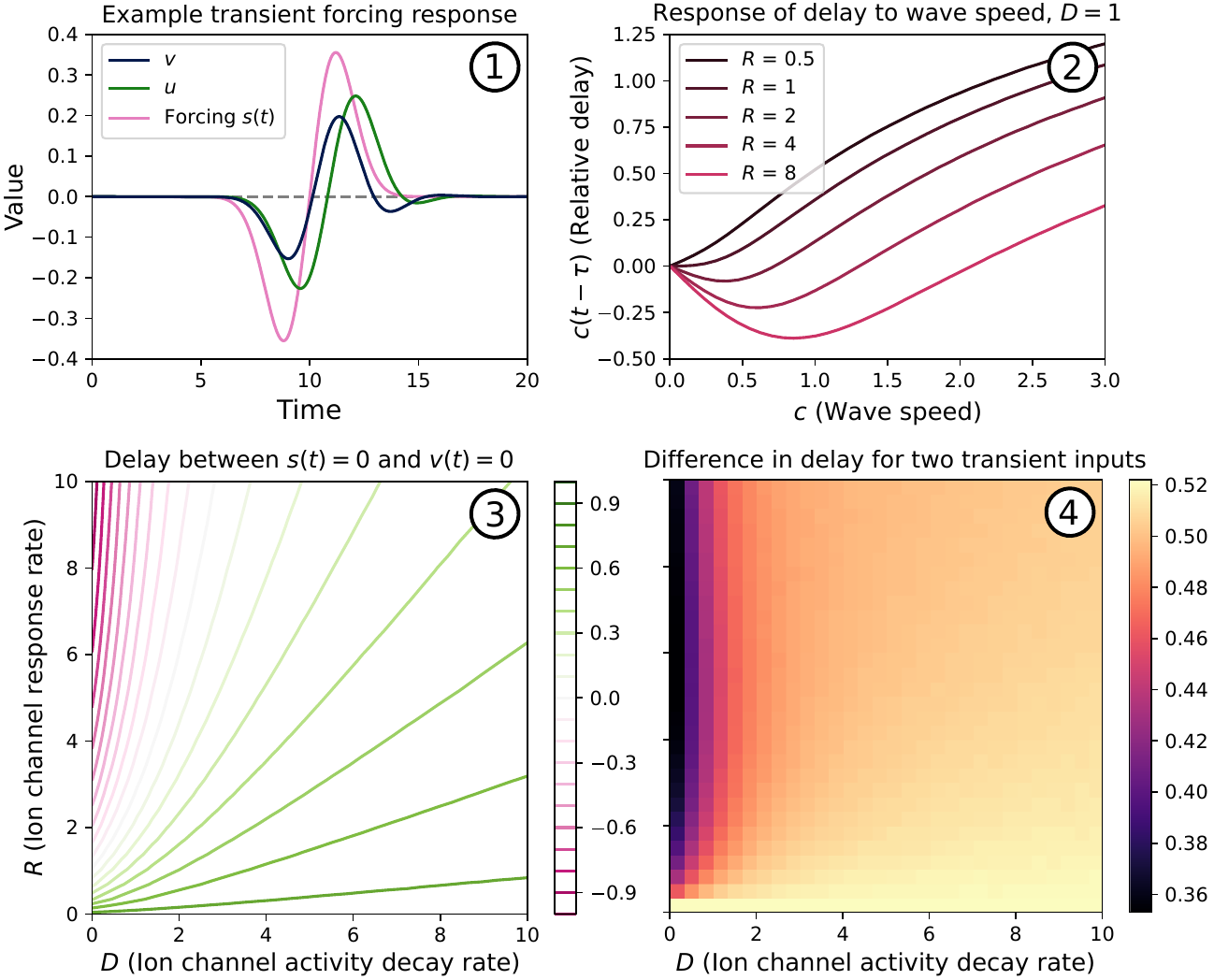}
    \caption{1: Example solution to the system of equations \eqref{eq:trans-v}--\eqref{eq:trans-i} with $R = 2$, $D = 1$, $\sigma = 1$, $\tau = 10$.  2: Change in relative delay $c(t^* - \tau)$ for $D = 1$ as a function of $c$.  3: Level sets of $t^* - \tau$ with $c = 1$ such that $t^* > 0$ is the smallest value that solves $v(t^*) = 0$.  4: Change in $t^* - \tau$ with $c = 1$ when a second input $s$ is introduced offset one time unit after the first.}
    \label{fig:transient-delay}
\end{figure}

Figure \ref{fig:transient-delay}.1 illustrates an example solution to this system in which $v$ falls then rises in response to the external forcing $s$, following some delay.  We quantify this delay by comparing the time between $t = \tau$ at which $s$ reaches $s(t) = 0$ from below with the time $t^*$ at which $v$ reaches $v(t) = 0$ from below.  
Figure \ref{fig:transient-delay}.3 shows the level sets of $t^* - \tau$ across varying values of $R$ and $D$ for $c = 1$, indicating that the delay decreases with $R$ and increases with $D$.  The level sets possess an approximately radial structure, indicating weak dependence upon the timescale of $u$, with the ratio $R/D$ and thus the magnitude of $u$ being the dominant factor.

Figure \ref{fig:transient-delay}.2 shows the change in the relative delay $c(t^* - \tau)$ as $c$ is varied for various values $R$ as $D = 1$ is fixed.  This indicates that a faster timescale of changing input generally leads to a longer delay in the response of $v$ once the delay is expressed in the timescale of the input.  This can be considered the ``distance" that $v$ lags behind $s$, by analogy to our travelling waves.  Exceptions occur at high values of $R$ and low values of $c$, for which the relative delay instead decreases as $c$ increases.
Finally, Figure \ref{fig:transient-delay}.4 shows the increase in delay caused by introducing a second input (modelling a second travelling spike in the wave) of the form 
\begin{equation}
    \frac{dv}{dt} = -v -u +s(t) + s(t - 1),
\end{equation}
where 1 is an offset chosen so that the two inhibitory windows overlap.  We see that there is a largely constant increase in delay across a broad range of values of $R$ and $D$, with the increase being smaller only when $D$ is small.

For a travelling wave, the inputs to a single neuron are driven by the firing pattern of the wave, which is tied directly to the time at which the neuron must fire as determined by the implicit wave equations.  For example, in a one-spike wave it is necessary that $v(t^-) = v_\text{th}$ and the maximum value of $s(t)$ occur at the same time, as the maximum value of $s(t)$ is itself generated at the firing position of the wave.  As this relationship is invariant with respect to parameter choice, a change in one parameter that alters the rate at which the neuron responds to the wave's input must be counterbalanced by a change in speed that restores the original response timing.  We therefore expect that an increase in $R$ will lead to an increase in $c$, and that adding an additional spike to the wave will lead to a decrease in $c$.  This slowing of the wave due to additional spikes has previously been observed in a laterally-inhibited IF model without a slow ion channel \cite{Avitabile2023}, and did not occur in a spiking model with only excitatory connections \cite{Ermentrout1998}.

\section{An event-based numerical algorithm for evolving the network state}
\label{sec:numerics}

\begin{figure}
    \begin{center}
    \includegraphics[width=\linewidth]{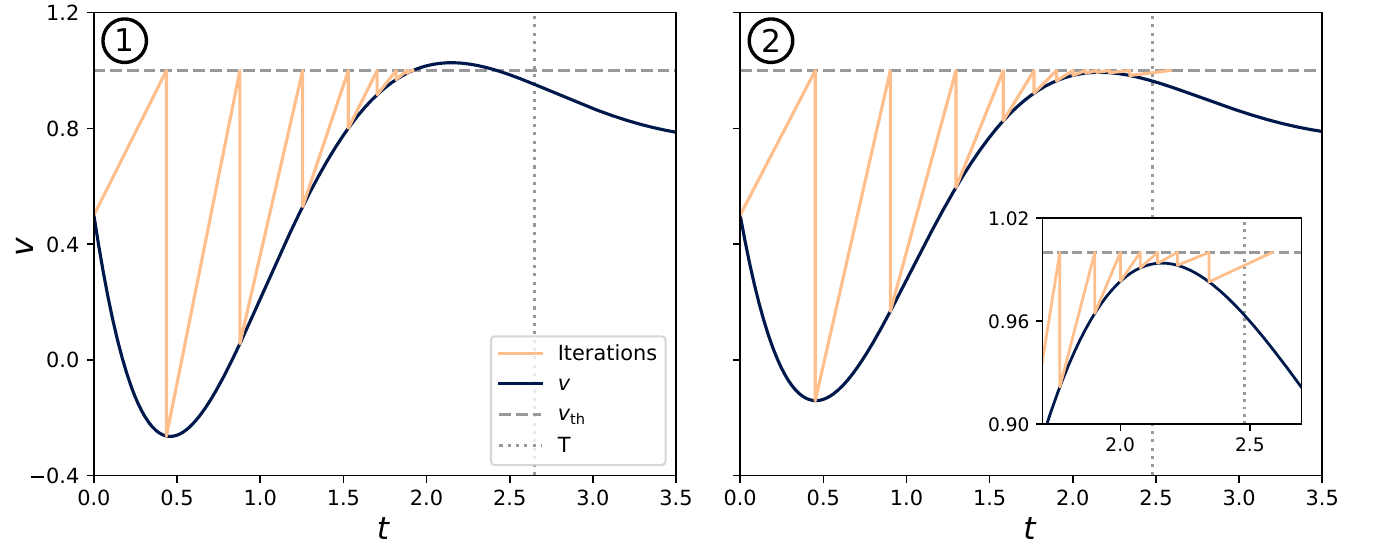}
    \caption{Examples of our root-finding algorithm in use for two example neurons.  The curve gives the value of $v(t)$, while the orange lines trace the progress of the algorithm up to 1: a root, or 2: the end of the possible solution interval $T$.}
    \label{fig:root_finding}
    \end{center}
\end{figure}

To perform numerical simulations of our discrete model \eqref{eq:v_discrete}--\eqref{eq:s_discrete}, we have developed an efficient event-based simulator that evolves the entire network state directly from one (global) firing time to the next in a similar fashion to~\cite{Bonilla-Quintana2017,Engelken2023}.  This approach is detailed further in Appendix \ref{app:numerics}, and we here present only a brief overview of the numerical scheme.

The algorithm comprises three steps. The first exploits the embarrassingly parallel nature of the problem (since our neurons are effectively uncoupled between firing events) to compute the next firing time for each neuron in the network in parallel, or determine that no such time exists.
The second step identifies the smallest firing time across the network and the third step advances the network to this firing time and then applies update rules determined by which neuron is firing.

The first of these steps takes advantage of our ability to find an analytical solution \eqref{eq:v-no-input}--\eqref{eq:u-no-input} to our single-neuron model between firing events.  However, we note to highlight that our analytical solution for $v$ \eqref{eq:v-no-input} does not admit an analytical solution to the inverse problem $v(t^*) = v_\text{th}$, and even determining the existence of a solution to said firing condition is non-trivial.  

To efficiently compute firing times for a given neuron, we exploit the exponentially decaying nature of $v$ to establish an upper bound $T > t^*$ on any firing time (regardless of whether $t^*$ actually exists), then modify the standard Newton--Raphson iterative formula
\begin{equation}
    t_{n+1} = t_{n} + \frac{v_\text{th} - v(t_n)}{m_n}, \quad n = 0,1,2, \dots
\end{equation}
by taking a value $m_n \geq \sup\{v'(t) : t \in [t_n, t_{n+1}]\}$ that overestimates the rate of change in $v$, preventing iterations from overshooting any root $v(t^*) = v_\text{th}$.  In the absence of a root, the iterations will eventually attain either $t_n > T$ or $m_n < 0$, with both cases indicating that no root exists (recalling the condition $v(t_0) < v_\text{th}$).  The sequence $m_n$ is chosen so that it converges to $v'(t)$ near the root, thus recovering the quadratic convergence of the standard Newton--Raphson algorithm as illustrated in Figure \ref{fig:root_finding}. Further details of the simulation algorithm, including the formulae used and pseudocode representations, are included in Appendix \ref{app:numerics}.

\section{Existence and stability of travelling waves}
\label{sec:tw_construction}

\subsection{Construction of travelling waves}
To facilitate our analysis of travelling waves in spiking neuronal networks analytically, it is more convenient to work in the continuum limit as $N \to \infty$ such that $\Delta_x ={2L}/{N} \to 0$.  In such a limit, the firing times $t_{n,k}$ are replaced by the firing time functions $t_k(x)$, which represent the $k$th firing time of a neuron at position $x$~\cite{Avitabile2023}.
Similarly, the locations of the firing neurons at a given time $t$ are captured by the function $X_k(t) := t_k^{-1}(t)$.
We label the pattern of firing described by a single firing position function $X_k$ as a \textit{wave component}, as a travelling wave in which each neuron fires $m$ times includes $m$ such functions in its description.

Taking the continuum limit of our input \eqref{eq:discrete-input} yields
\begin{equation}
    f_\text{in}(x, t) = \sum_{n \in \mathbb{Z}} \int_{\Omega} w(|{x} -{y}|) \delta\big(t - t_n({y})\big) \text{d}{y}.
    \label{eq:input_continuum}
\end{equation}
As in \cite{Avitabile2023}, we assume the the firing time functions are consistently ordered such that $t_{k+1}(x) > t_{k}(x)$ for all $k$ and all $x \in \mathbb{R}$.  Furthermore, we only consider the case in which $t_n(x)$ is monotonic.  Put together, in the continuum limit \eqref{eq:v_discrete}--\eqref{eq:s_discrete} becomes
\begin{align}
    \frac{\partial v}{\partial t} &= I - v - u + s - (v_\text{th} - v_\text{r}) \sum_{k \in \mathbb{Z}} \delta \big(t - t_k(x)\big), \label{eq:v_continuum} \\
    \frac{\partial u}{\partial t} &= Rv - Du, \label{eq:u_continuum} \\
    \frac{\partial s}{\partial t} &= -\beta s + \beta f_\text{in}(x, t).
\label{eq:s_continuum}
\end{align}

\noindent The continuum model \eqref{eq:v_continuum}--\eqref{eq:s_continuum} supports travelling wave solutions with speed $c > 0$ and firing time functions of the form
\begin{equation}
    t_j(x) = \tau_j + \frac{x}{c}, \quad j = 1,2,\dots,m,
\end{equation}
where $\tau_j \geq 0$ are temporal offsets such that $\tau_j - \tau_k$ gives the time between the $j$th and $k$th firing.  The translation invariance of \eqref{eq:v_continuum}--\eqref{eq:s_continuum} allows us to freely choose our origin, so we set $\tau_1 = 0$ for convenience.
Following this constraint, an $m$-spike wave (in which each neuron fires $m$ times) has $m$ unknown variables, $(c, \tau_2, ... , \tau_m)$, and is defined by the $m$ conditions
\begin{align}
    & \tau_1 = 0, \label{eq:wave-cond-1}\\
    & v\left(x,(\tau_j + x/c)^-\right) = v_\text{th}, \quad j=1,\dots,m \label{eq:wave-cond-2}, \\
    & v\big(x, (t + x/c)\big) < v_\text{th}, \quad
    \forall t \notin \{\tau_1, ..., \tau_m\},\ x \in \mathbb{R}. \label{eq:wave-cond-3}
\end{align}
Without loss of generality we order the firing offsets so that $\tau_1 < \tau_2 < \dots < t_m$.
Note that condition \eqref{eq:wave-cond-3} ensures that the wave contains exactly $m$ spikes since violation of this condition would introduce additional firing events and thus change the spike count.

Each of the $m$ firing events provide a contribution to the total input to each neuron of the form \eqref{eq:input_continuum}, which can be summed linearly to give a total input
\begin{equation}
    f_\text{in}(x, t) 
    = \sum_{j=1}^m \int_{-\infty}^\infty
    w\big( |x - x'|\big) \delta \big(t - (\tau_j + x'/c) \big) dx'.
    \label{eq:fin_wave}
\end{equation}
Upon moving to the co-moving frame with $\xi = t-x/c$, \eqref{eq:v_continuum}--\eqref{eq:s_continuum} becomes
\begin{align}
    \frac{\partial v}{\partial \xi} &= I - v - u + s - (v_\text{th} - v_\text{r}) \sum_{j=1}^m \delta (\xi - \tau_j), \label{eq:v_comoving}\\
    \frac{\partial u}{\partial \xi} &= Rv - Du, \label{eq:u_comoving}\\
    \frac{\partial s}{\partial \xi} &= -\beta s + \beta f_\text{in}(\xi), \qquad f_\text{in}(\xi) = c \sum_{j=1}^m w\big( c(\xi - \tau_j) \big).
    \label{eq:s_comoving}
\end{align}
Travelling waves are then time-invariant solutions to \eqref{eq:v_comoving}--\eqref{eq:s_comoving}, so that for $\mathbf{v}=(v,u)^\top$ we have $\mathbf{v}(\xi,t) = \mathbf{v(\xi)}$ where
\begin{equation}
\begin{split}
    \textbf{v}(\xi) =&\ I e^{\xi \textbf{M}}  \int_{-\infty}^\xi
    e^{-\zeta\textbf{M}} \begin{pmatrix} 1 \\ 0 \end{pmatrix} d\zeta
    + \sum_{j=1}^m  \beta e^{\xi \textbf{M}} \int_{-\infty}^\xi
    e^{-\zeta \textbf{M}} \begin{pmatrix} 1 \\ 0 \end{pmatrix}
    e^{-\beta \zeta } \int_{-\infty}^\zeta e^{\beta r} 
    w\big( c(r - \tau_j) \big) c\ dr\ d\zeta\\
    & \qquad - (v_\text{th} - v_\text{r}) \sum_{j=1}^m e^{(\xi - \tau_j )\mathbf{M}} 
    \begin{pmatrix} 1 \\ 0 \end{pmatrix}
    \Theta(\xi - \tau_j),
\end{split}
\label{eq:comoving_soln}
\end{equation}
where $\Theta$ is the Heaviside step function.  
We can the apply the Newton--Raphson scheme to the $m$ conditions $v(\tau_m^-) = v_\text{th}$ to find the unknowns $(c, \tau_2, ..., \tau_m)$ that specify wave solutions with the desired number of firing events.
Note that for each $m$, we expect to find multiple distinct wave solutions existing for the same parameter values, as seen in previous work on the in the case $R=0$ \cite{Avitabile2023}.  Examples can seen below for $m = 1$ and $m = 2$ in Figure \ref{fig:PAC-solutions-2spike}.

\subsection{Wave perturbations and stability}

\noindent Linear stability of the travelling waves specified by \eqref{eq:comoving_soln} can be examined by studying the response of system \eqref{eq:v_continuum}--\eqref{eq:s_continuum} to perturbations in the interval between firing events, since neurons interact solely through these firing events.  This approach to stability analysis is detailed in~\cite{Avitabile2023}, which in turn extends \cite{Bressloff2000}.  Here we recapitulate the key results and refer the interested reader to the aforementioned paper.

An $m$-spike wave comprises $m$ wave components, each moving at speed $c$ satisfying
\begin{equation}
    v\big( X_j(t),\ t^- \big) = v_\text{th}, \quad j \in \{1, ..., m\},
\end{equation}
with the unperturbed firing positions given by $X_j(t) = c(t - \tau_j)$.  Following~\cite{Bressloff2000}, we introduce a perturbation of the form
\begin{equation}
    \tilde{X}_j(t) = c(t - \tau_j) + \epsilon \phi_j(t),\quad
    \phi_j(t) = \text{Re}\big(\Phi_j e^{\lambda t}\big) + o(\epsilon),
\end{equation}
for $\lambda, \Phi_j \in \mathbb{C}$ and $\phi'_j(t) > -c$, $\forall t$.  This introduces variability in the speed of each wave component, which may be growing or decaying, and may be oscillatory, but never causes them to reverse direction.  For brevity, in subsequent equations we will simply write $\phi_j(t) = \Phi_j e^{\lambda t} + o(\epsilon)$ while still only considering the real part implicitly.

The input to each wave under the perturbed firing times is given by
\begin{equation}
    f_\text{in}(x, t) = c \sum_{j=1}^m 
    w\big(x - c(t - \tau_j) + \epsilon\phi_j(t)\big).
\end{equation}
Following from \eqref{eq:comoving_soln}, we can write $m$ equations describing the behaviour of variables at our perturbed firing events as
\begin{equation}
\begin{split}
    \tilde{\textbf{v}} \big( X_k(t) + \epsilon \phi_k(t), t \big) =&\ I e^{t\textbf{M}}  \int_{-\infty}^t 
    e^{-t'\textbf{M}} \begin{pmatrix} 1 \\ 0 \end{pmatrix} dt'
    + \sum_{j=1}^m  \beta e^{t\textbf{M}} \int_{-\infty}^t 
    e^{-t'\textbf{M}} \begin{pmatrix} 1 \\ 0 \end{pmatrix} S_{jk}(t') dt'\\
    & \qquad -(v_\text{th} - v_\text{r}) \sum_{j=1}^{k-1} 
    e^{\left(t - \tilde{X}_j^{-1}\left(\tilde{X}_k(t)\right)\right)\mathbf{M}},
\end{split}
\end{equation}
with $d_{jk} := \tau_j - \tau_k$, noting that have already evaluated the Heaviside functions since we know which firing event $k$ we are following. 
The $S_{jk}$ terms expanded to first order in $\epsilon$ are given by
\begin{equation}
\begin{split}
    S_{jk}(t') =&\ e^{-\beta t'} \int_{-\infty}^{t'} e^{\beta r}
    w\big( c(r - t) - cd_{jk}  \big) \cdot c\ dr - \epsilon \left( 
    w\big( c(t' - t) - cd_{jk} \big)
    \big(\Phi_j e^{\lambda t'} - \Phi_k e^{\lambda t} \big) \right) \\
    & \qquad + \epsilon \beta e^{-\beta t'} \int_{-\infty}^{t'} e^{\beta r}
    w\big( c(r - t) - cd_{jk} \big) 
    \big(\Phi_j e^{\lambda r} - \Phi_k e^{\lambda t} \big) dr
    + O\big(\epsilon^2 \big).
\end{split}
\end{equation}
To complete our linearisation, we make use of
\begin{equation}
    \tilde{X}_j^{-1}(x) = X_j^{-1}(x) - \frac{\epsilon}{c} \phi_j\left(X_j^{-1}(x)\right) + O\left(\epsilon^2\right),
\end{equation}
allowing us to rewrite the firing-reset exponentials as
\begin{equation}
    \exp \left(\left(t - \tilde{X}_j^{-1}\big(\tilde{X}_k(t)\big)\right)\mathbf{M} \right)
    = e^{-d_{jk} \mathbf{M}}
    \left(\mathbf{I} + \frac{\epsilon}{c} e^{\lambda t}
    \left(\Phi_j e^{\lambda d_{jk}} - \Phi_k\right) \mathbf{M} \right)
    + O\left(\epsilon^2\right).
\end{equation}
Finally, we arrive at the equation for the perturbed $v$ expanded to first order in $\epsilon$:
\begin{equation}
    \tilde{v} \big( X_k(t) + \epsilon \phi_k(t), t \big) 
    = v_\text{th}
    + \epsilon e^{\lambda t} \sum_{j=1}^m \big( \Phi_j F_{jk}(\lambda) - \Phi_k F_{jk}(0) \big)
    + O\left(\epsilon^2\right),
\label{eq:wave_perturb_expanded}
\end{equation}
where
\begin{equation}
\begin{split}
    F_{jk}(\lambda) 
    = (1, 0) \Bigg(& \beta 
    \int_{-\infty}^0 e^{-t'\textbf{M}} \begin{pmatrix} 1 \\ 0 \end{pmatrix} 
    \bigg( w\big( ct' - cd_{jk} \big)  e^{\lambda t'}
    - \beta e^{-\beta t'} \int_{-\infty}^{t'} e^{\beta r} 
    w\big( cr - cd_{jk} \big) e^{\lambda r} dr\bigg) dt'\\
    &- \mathbbm{1}_{j < k}  (v_\text{th} - v_\text{r}) \frac{1}{c}  
    e^{-d_{jk} \mathbf{M}} \mathbf{M} \begin{pmatrix} 1 \\ 0 \end{pmatrix} e^{\lambda d_{jk}}
     \Bigg).
\end{split}
\end{equation}
Since these equations describe the perturbed wave's firing events, we require $v \big( X_k(t) + \epsilon \phi_k(t), t \big) = v_\text{th}$; that is, the spatiotemporal profile for $v$ at the perturbed firing times must be at threshold.  Given \eqref{eq:wave_perturb_expanded}, this means we require the order-$\epsilon$ and above terms to evaluate as zero. 
Considering only the order-$\epsilon$ terms and writing $\mathbf{\Phi} = (\Phi_1, ..., \Phi_m)^\top$, this gives us the condition
\begin{equation} 
    \epsilon e^{\lambda t} \big( \mathbf{F}(\lambda) - \mathbf{G} \big) \mathbf{\Phi} = 0 \quad \forall t,
    \label{eq:stab_mat_eq}
\end{equation}
where $\mathbf{F}$ is an $m$-by-$m$ matrix and $\mathbf{G}$ is a diagonal matrix with entries
\begin{equation}
\left[\textbf{F}(\lambda)\right]_{jk} = F_{jk}(\lambda), \quad \textbf{G}_{kk} = \sum_{j=1}^m F_{jk}(0).
\end{equation}
For \eqref{eq:stab_mat_eq} to have non-trivial solutions when $\mathbf{\Phi} \neq \mathbf{0}$, we require $\ker\big( \mathbf{F}(\lambda) - \mathbf{G} \big) \neq \{\mathbf{0}\}$, which in turn requires a $\lambda$ such that
\begin{equation}
    \det \big( \mathbf{F}(\lambda) - \mathbf{G} \big) = 0.
\label{eq:stability_cond}
\end{equation}
The collection of all such $\lambda$ gives the point spectrum of the computed wave.  While this is not a sufficient tool to find all such $\lambda$ or otherwise prove their (non)existence within the positive-real half of the complex plane, it enables us to find precise values for specific roots and to determine the presence (or lack thereof) of roots in finite regions, which can give a strong indicator of linear stability that can then be compared to numerical simulations.
Note that there will always be an eigenvalue $\lambda = 0$ due to the translation invariance of \eqref{eq:v_continuum}--\eqref{eq:s_continuum}. If the real parts of all other $\lambda$ are negative, then the perturbation will decay and hence the wave is stable. If $\text{Re}(\lambda) > 0$ for some $\lambda$ then the perturbation will grow and the wave is unstable.

\section{Dependence of wave solutions on model parameters}
\label{sec:results}
With equations for travelling wave construction \eqref{eq:comoving_soln} and stability \eqref{eq:stability_cond} in hand, we now proceed to investigate how these waves vary as we move through parameter space.
We shall consider variation with respect to the parameters $R$, $D$ for the ion channel variable $u$ and also with respect to the synaptic timescale $\beta$ for comparison with results in~\cite{Avitabile2023}.
To do this, we use the pseudo-arclength continuation method~\cite{Dahlke2024,Laing2014} to track wave solution branches after having found suitable initial guesses.
Our base set of parameters is indicated in Table~\ref{table:PAC_params}, with non-default parameter values indicated in the main text and figure captions where appropriate.
We remark that we allow our algorithm to follow non-admissible solutions, i.e., those that violate \eqref{eq:wave-cond-3}, since these \textit{virtual invariant profiles} still play a role in shaping diagrams~\cite{diBernardoBook}, particularly when their spectrum lies entirely in the left complex plane (thus being ``stable"), since these attract nearby solutions.
These inadmissible solution branches, which appear following passage of a admissible branch through a grazing bifurcation where $v(\tau_\text{graze}) = v_\text{th}$, $v_\xi(\tau_\text{graze}) = 0$ for $\tau_\text{graze} \notin \{\tau_1, \dots, \tau_m\}$, are plotted in a lighter shade compared to the admissible solutions to distinguish between the two.

\begin{table}
\begin{center}
    \begin{tabular}{|c|c|c|}
        \hline
        \textbf{Parameter} & \textbf{Symbol} & \textbf{Value} \\
        \hline
        Ion channel activity decay rate & $D$ & 1 \\
        \hline
        Synaptic response rate & $\beta$ & 6 \\
        \hline
        Resting voltage & $v_\text{rest}$ & 0.9 \\
        \hline
        Excitatory/inhibitory signal strength scale & $A, B$ & 2 \\
        \hline
        Excitatory signal range scale & $a$ & 1 \\
        \hline
        Inhibitory signal range scale & $b$ & 2 \\
        \hline
        Number of neurons simulated & $N$ & 2000 \\
        \hline
        Domain length simulated & $2L$ & 20 \\
        \hline
    \end{tabular}
    \caption{Default parameter values for wave solutions and simulations.}
    \label{table:PAC_params}
\end{center}
\end{table}

\subsection{Wave solutions under variation of ion channel parameters}

Figure \ref{fig:PAC-solutions-2spike} plots the wave speeds $c$ of one- and two-spike solutions (gold and blue curves, respectively) as a function of the ion channel response rate $R$, taking fixed $D = 1$, $\beta = 6$, along with selected example profiles of $\left(v(\xi),u(\xi),s(\xi)\right)$ and raster plots of corresponding simulations at various points along the branches.
Simulations were run until either 4000 firing events were found or no further neurons would fire.
We observe the existence of two distinct branches of stable two-spike wave solutions; a slower wave (dark blue, Example 3) with a monotonic inter-spike rise in $v$ and single peak in $s$, and a faster wave (light blue, Example 1) with a longer inter-spike period $\tau_2$ and two peaks in $s$ associate with each spike.  The slower branch terminates at a grazing bifurcation corresponding to the maximum in $v$ after the second firing event attaining $v_\text{th}$.  The faster branch instead terminates at a fold bifurcation, after which the waves become unstable (see Example 2).
Here the unstable wave solution possesses a small real eigenvalue, and under simulation its wave components gradually drift apart as it converges to the stable solution of Example 1.

We refer to wave solutions such as the slower stable wave as \textit{atomic} waves, in reference to the fact all the firing events are indivisibly coupled within a single excitatory peak.  This is in contrast to the faster stable wave in which the wider spacing between the firing events and similarity in speed to the one-spike solution suggest that it can be considered a composite of two one-spike waves that are weakly coupled, as is often seen in excitable systems~\cite{seidelCoherentPulseInteractions2025}.

\begin{figure}
    \centering
    \includegraphics[width=\textwidth]{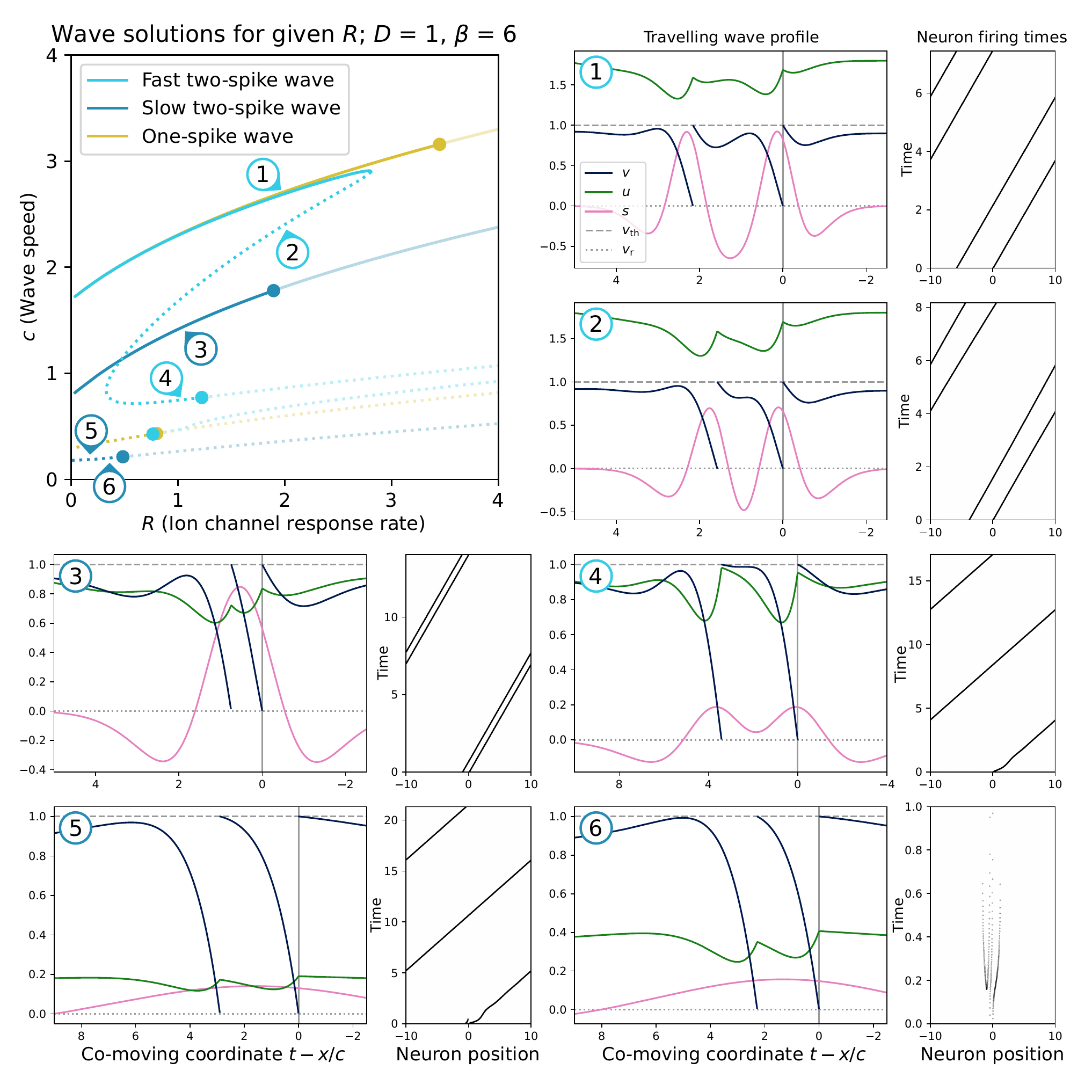}
    \caption[Wave solution bifurcation diagram in $R$.]{Wave speeds $c$ of solutions for varying values of $R$.  Blue curves represent solutions for two-spike waves, and gold curves represent solutions for one-spike waves.  Solid curves are stable, while dotted curves are unstable.  Faded curves are inadmissible solutions that exist beyond a grazing bifurcation (marked with dots).  Six example solutions are presented as travelling wave profiles in co-moving coordinates $t - x/c$, accompanied by the raster plots of firing events found through a numerical simulation that took the the wave profile as its initial conditions.}
    \label{fig:PAC-solutions-2spike}
\end{figure}

Other unstable wave profiles are shown in Examples 4, 5, and 6.
Examples 4 and 5 show the near-immediate loss of a spike, producing a one-spike wave after some oscillation.  Example 6 is on the same branch as Example 5 but closer to the grazing bifurcation; as such, its instability causes it to contact the firing threshold an additional time.
The high values of $v$ either side of the maximum cause additional firing events to propagate in both directions, disrupting the wave and sending the network into a quiescent state.  This initial symmetry is an established problem in the spontaneous generation of travelling waves \cite{Wiener1946}; for a graze to increase the spike count $m$ of a wave, the existing wave has to break the symmetry of the graze without the graze destroying the wave in turn.

For all types of wave presented in Figure \ref{fig:PAC-solutions-2spike}, we observe that the wave speed increases as $R$ increases.  The atomic two-spike wave is slower than the (trivially atomic) one-spike wave, and the difference between the two remains approximately constant across values of $R$.  Both of these results correspond to the intuition generated in the toy model of transient forcing in Sect.~\ref{sec:transient}. In Figure \ref{fig:speed_comp}, we observe the difference in speed between the one-spike wave and the atomic $m$-spike waves up to $m = 10$, seeing that the near-constant difference does not hold for higher values of $m$.  We also note that as $m$ increases, the graze bifurcation occurs for smaller values of $R$, while at $m=3$, a Hopf bifurcation appears near $R=0$ and moves up to higher values of $R$. At $m=7$ the bifurcations cross over and the region of stable admissible solutions disappears entirely.

\begin{figure}
    \centering
    \includegraphics[width=\linewidth]{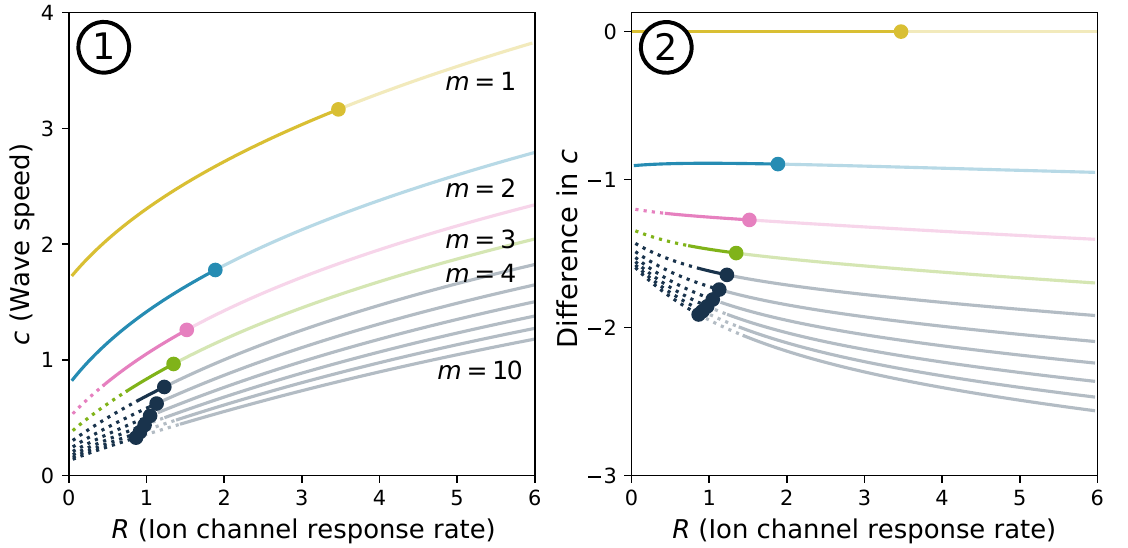}
    \caption{Wave speeds for atomic waves up to 10 spikes.  Curves are ordered with $m=1$ at the top and $m=10$ at the bottom.  Dotted curve sections are unstable, and faded curve sections are inadmissible due to a grazing bifurcation.  1: Absolute speeds.  2: Speeds relative to the one-spike reference wave.}
    \label{fig:speed_comp}
\end{figure}

The weakly-coupled fast two-spike wave is only the first such two-spike solution of its type; there are further weakly-coupled solution branches with higher values of the inter-spike time $\tau_2$, with speeds near-identical to that of the one-spike wave.  As such, they have been omitted from Figure \ref{fig:PAC-solutions-2spike} and instead displayed  alongside the low-$\tau_2$ branch in Figure \ref{fig:waves_nat_freq} for $D = 0.85$, $D=0.88$ and $D=1$.

The values of $\tau_2$ along the solution branches shown closely follow multiples of the natural period of the single neuron model, except in the region around $R = 5$, where each branch (aside from that with the lowest value $\tau_2$) transitions to a multiple of the natural period one integer step lower or higher, determined by a lower or higher value of $D$, respectively, as shown by the cases $D = 0.85$ and $D = 1$.  Between these values of $D$, a rearrangement occurs in which the branches reconfigure as folds near $R=5$ and exchange connections; the case $R=0.88$ gives a snapshot of this process.  The branches retain a stable-unstable alternating pattern, with stable solutions at odd or even multiples of the natural period for $R < 5$ or $R > 5$, respectively.  For our parameter, choices the region of reconfiguration $R \approx 5$ is beyond the position of the graze bifurcations, rendering the solutions in said region inadmissible, however, it is likely that such solutions may be admissible in other variants of our model.

\begin{figure}
    \centering
    \includegraphics[width=\linewidth]{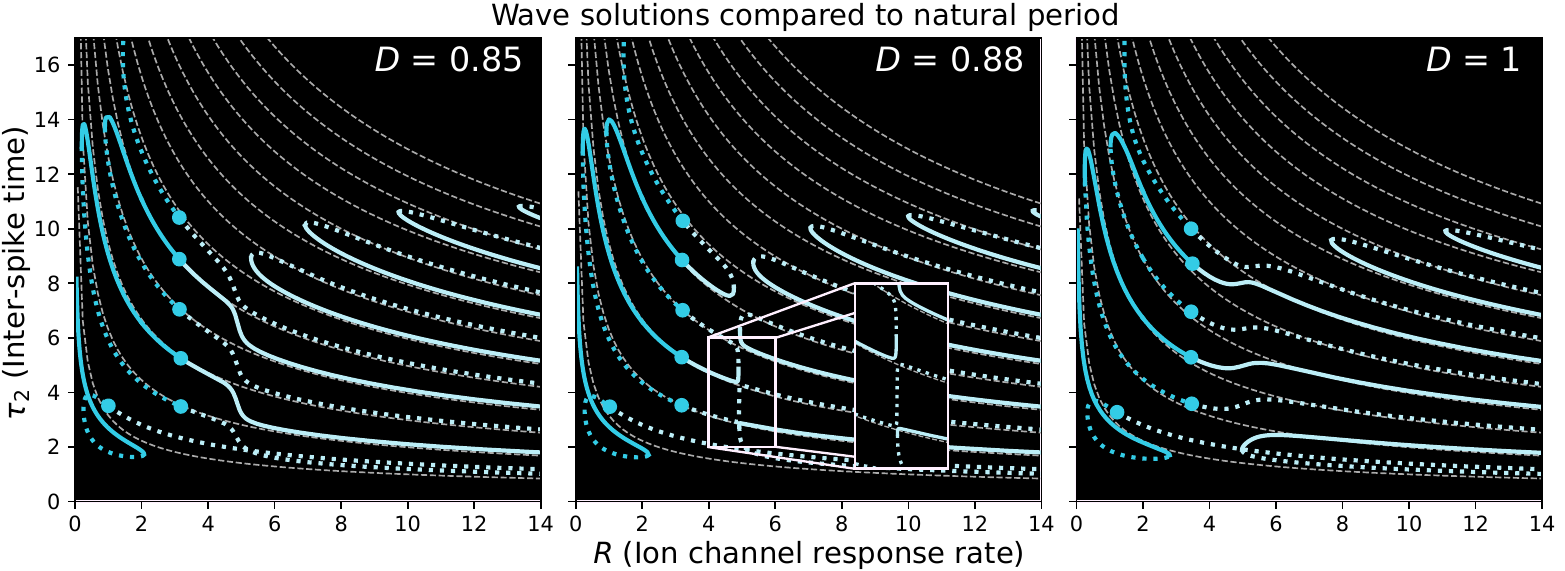}
    \caption[Inter-spike time of fast two-spike waves compared to the natural period of the neuron.]{Inter-spike time $\tau_2$ of fast weakly-coupled two-spike waves compared to integer multiples of the natural period $|q|/2\pi$ (grey dashed lines) as functions of $R$, for fixed values $D=0.85$, $D=0.88$, and $D=1$.  Other parameters are as in Table \ref{table:PAC_params}.  Blue lines and dots mark wave solutions and grazes according to the scheme set out for Figure \ref{fig:PAC-solutions-2spike}.  An inset for $D=0.88$ shows that only one higher-$\tau_2$ branch (unstable, moving from the second to the third multiple) continues across the transitional zone rather than folding back.}
    \label{fig:waves_nat_freq}
\end{figure}

\begin{figure}
    \centering
    \includegraphics[width=0.5\textwidth]{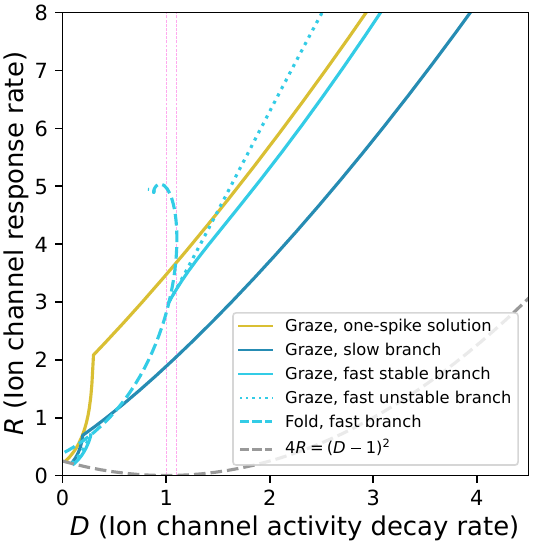}
    \caption{Two-parameter bifurcation diagram under variation of $R$ and $D$ for the graze and fold bifurcations along the stable one- and two-spike wave solutions seen in Figures \ref{fig:PAC-solutions-2spike} and \ref{fig:diptych_R} (coloured accordingly). Pink lines correspond to the values $D=1$ and $D=1.1$ presented in said figures.  Only bifurcations of stable solutions are presented, with the exception of the unstable branch that connects to the fast stable branch via the fold bifurcation.  The area below each bifurcation curve is the region in which the corresponding solution exists.  The limit $4R = (D - 1)^2$ is also shown to indicate the boundary at which the single-neuron solutions $\mathbf{v}$ for \eqref{eq:vu_matvec} no longer take a trigonometric form.  The curve tracking the fold bifurcation is terminated after passing through the reorganisation noted in Figure \ref{fig:waves_nat_freq}.}
    \label{fig:RD_diagram}
\end{figure}

To more completely explore the changes in bifurcations as $D$ varies, Figure \ref{fig:RD_diagram} presents a two-parameter bifurcation diagram in $R$ and $D$.  Curves represent the values of $R$ and $D$ at which a fold (dashed line) or graze (solid or dotted line) exists; for clarity, we focus only the bifurcations associated with stable solution, as well as the graze on the unstable side of the fold of the fast two-spike wave (dotted line).  The bifurcation curves are coloured according to their respective solution branch colours in Figure \ref{fig:PAC-solutions-2spike}.  The values $D=1$ and $D=1.1$ are marked with pink lines to indicate the bifurcations corresponding to the plots in Figure \ref{fig:diptych_R}.
In each case, the bifurcation curve (or sequence of fold and graze curves in the case of the fast stable two-spike solutions) divides the parameter space into the lower-right region (low $R$, high $D$) in which the corresponding admissible solution exists, and the upper-left region (high $R$, low $D$) in which the solution does not exist.  This indicates a broad pattern that a strong adaptation variable $u$ eliminates these wave solutions, while changing the timescale has relatively little impact.  

Grazing bifurcations on the fast two-spike branch exist in two distinct regions of the diagram, with the fold bifurcation splitting them.  The one-spike and slow two-spike solutions do not pass through a fold bifurcation, but do exhibit two distinct sections to their graze-tracking curves, which join at a sharp corner.
Each section corresponds to a graze produced by a different local maximum in the travelling wave profile after the last firing event grazing the firing threshold $v_\text{th}$.
At higher values of $R$, the first post-firing maximum grazes the firing threshold, but at lower values of $R$ it is instead the second maximum.  The point at which the curve segments meet in the $(R,D)$ plane projection gives the parameter values at which both maxima graze the threshold simultaneously.  By enabling the generation of this double graze, the subthreshold oscillations of the neuron's voltage produce an organising centre in the bifurcation diagram.  This double grazing point is noted under the framework of \textcite{Kowalczyk2006} as a Type III codimension-two grazing bifurcation, with our system as a whole considered a Class A discontinuous system.

\subsection{Simulations across distinct bifurcation scenarios under variation of the ion channel response rate}
\begin{figure}
    \centering
    \includegraphics[width=0.95\textwidth, trim=0 0 0 0]{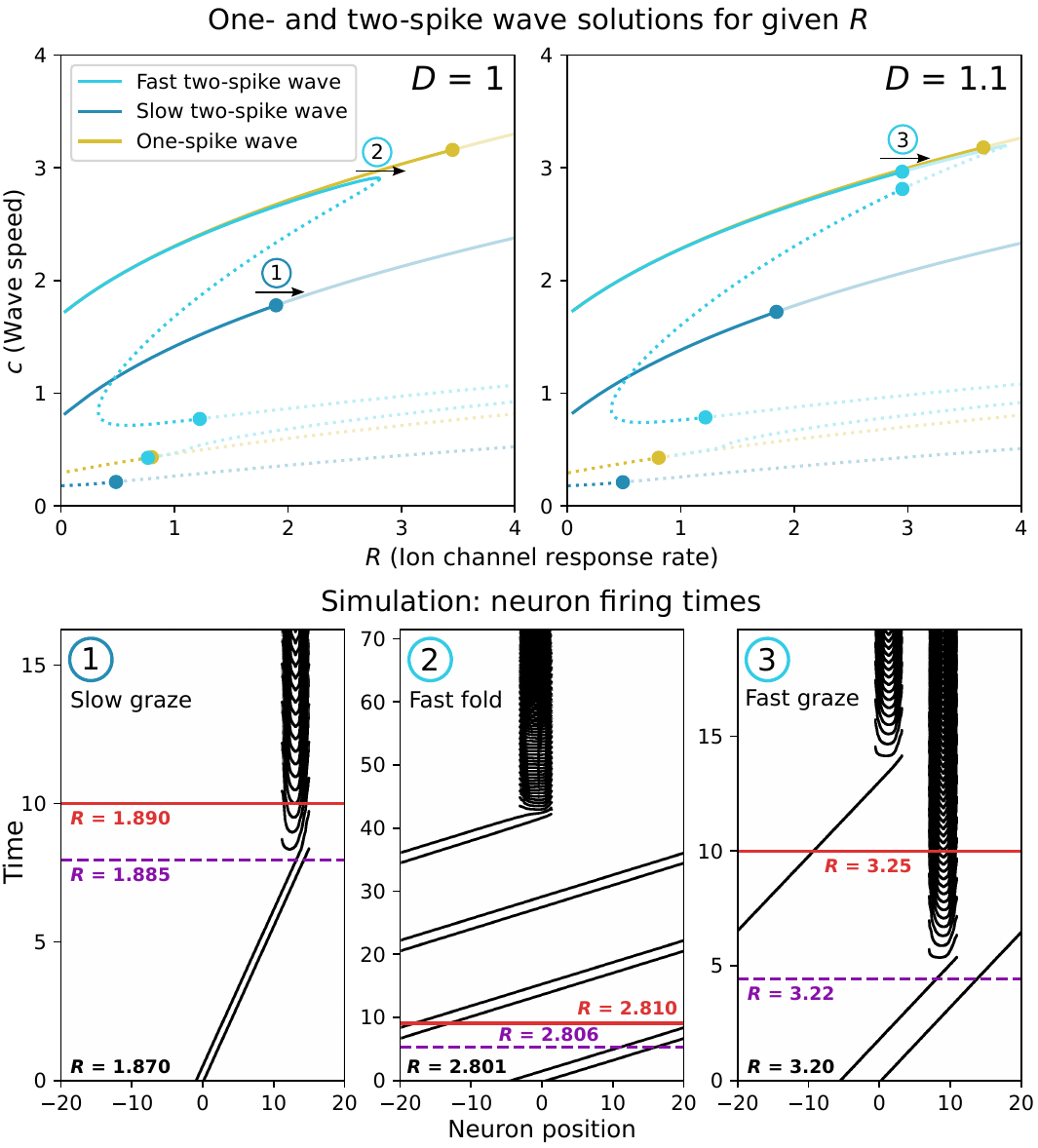}
    \caption[$R$ bifurcation diagrams for $D=1$ and $D=1.1$]{Upper: Bifurcation diagrams for one- and two-spike wave solutions as $R$ varies, shown for $D = 1$ (left) and $D = 1.1$ (right).  As $D$ increases the position of the fold bifurcation moves to the right, allowing grazing bifurcations (marked with a dot) on both forks of the fold to appear.  Lower: Simulations of stable waves as $R$ is linearly increased across a bifurcation.  Red lines show the time at which parameter variation stops, and purple dashed lines show the bifurcation value predicted in the continuum model.  1: Simulation across the slow two-spike branch's graze; $D = 1$, 20000 neurons.  2: Simulation across the fast two-spike branch's fold; $D = 1$, 4000 neurons.  3: Simulation across the fast two-spike branch's graze; $D = 1.1$, 20000 neurons.  }
    \label{fig:diptych_R}
\end{figure}

\noindent As we move from $D=1$ to $D=1.1$, we see in Figure \ref{fig:diptych_R} that the fold on the fast two-spike branch moves to a higher value of $R$, but grazes also emerge near the fold, rendering solutions at the fold itself inadmissible.
We investigate the impact of this change in bifurcation structure via simulation involving slow variation of $R$ according to
\begin{equation}
    R(t) = R_0 + \delta_R \min\{t, t_\text{fin}\},
\end{equation}
for some initial value $R_0$ at initial time t = 0, constant rate of change $\delta_R$, and final value $R_\text{fin} = R_0 + \delta_R t_\text{fin}$.  The insets on Figure \ref{fig:diptych_R} show three such simulations.  Figure \ref{fig:diptych_R}.1 simulates across the graze on the slow two-spike solution branch with $D = 1$, Figure \ref{fig:diptych_R}.2 simulates across the fold on the fast two-spike solution branch with $D = 1$, and Figure \ref{fig:diptych_R}.3 simulates across the graze that appears before the fold on the fast two-spike solution branch with $D=1.1$.  All three simulations were carried out on a ring of neurons, starting from initial conditions representing an unperturbed travelling wave state.  The value $R_0$ are labelled in black, and the time $t_\text{fin}$ and the time at which $R$ reaches the continuum-model bifurcation value are marked with a solid red and dashed purple line, respectively.  In all three cases, we note a transition from a wave to a stationary bump, but the speed of transition and number of bumps varies.

Figure \ref{fig:diptych_R}.1 shows that pushing the fast stable two-spike wave over the fold bifurcation with $D=1$ results in a long delay before the wave transitions to a stationary bump, in accordance with the slow dynamics described by the small eigenvalues near the fold and the expected delayed bifurcation phenomenon under our slow parameter variation~\cite{tzouSlowlyVaryingControl2015}. 
We remark that the resulting bump exhibits near symmetric firing around its centre, with an emerging chevron pattern similar to that observed in~\cite{Coombes2025,Chow2006b}.
Figure \ref{fig:diptych_R}.2 and \ref{fig:diptych_R}.3 instead show the result of pushing a wave over a graze bifurcation.  In both cases, the transition is near-immediate (with some discrepancy due to both the discretisation and the rate of change $\delta_R$), and first marked by the onset of firing expanding outwards from the graze point. With the strong coupling of the slow two-spike wave, the entire wave transitions to a bump at the graze, while the weak coupling of the fast two-spike wave means only the second wave component forms a bump at first, with the first wave component continuing until it loops around the ring and is blocked by the lateral inhibition from the first bump, triggering a transition into a second bump.

\subsection{Wave solutions under variation of the synaptic timescale}
\begin{figure}
    \centering
    \includegraphics[width=\textwidth]{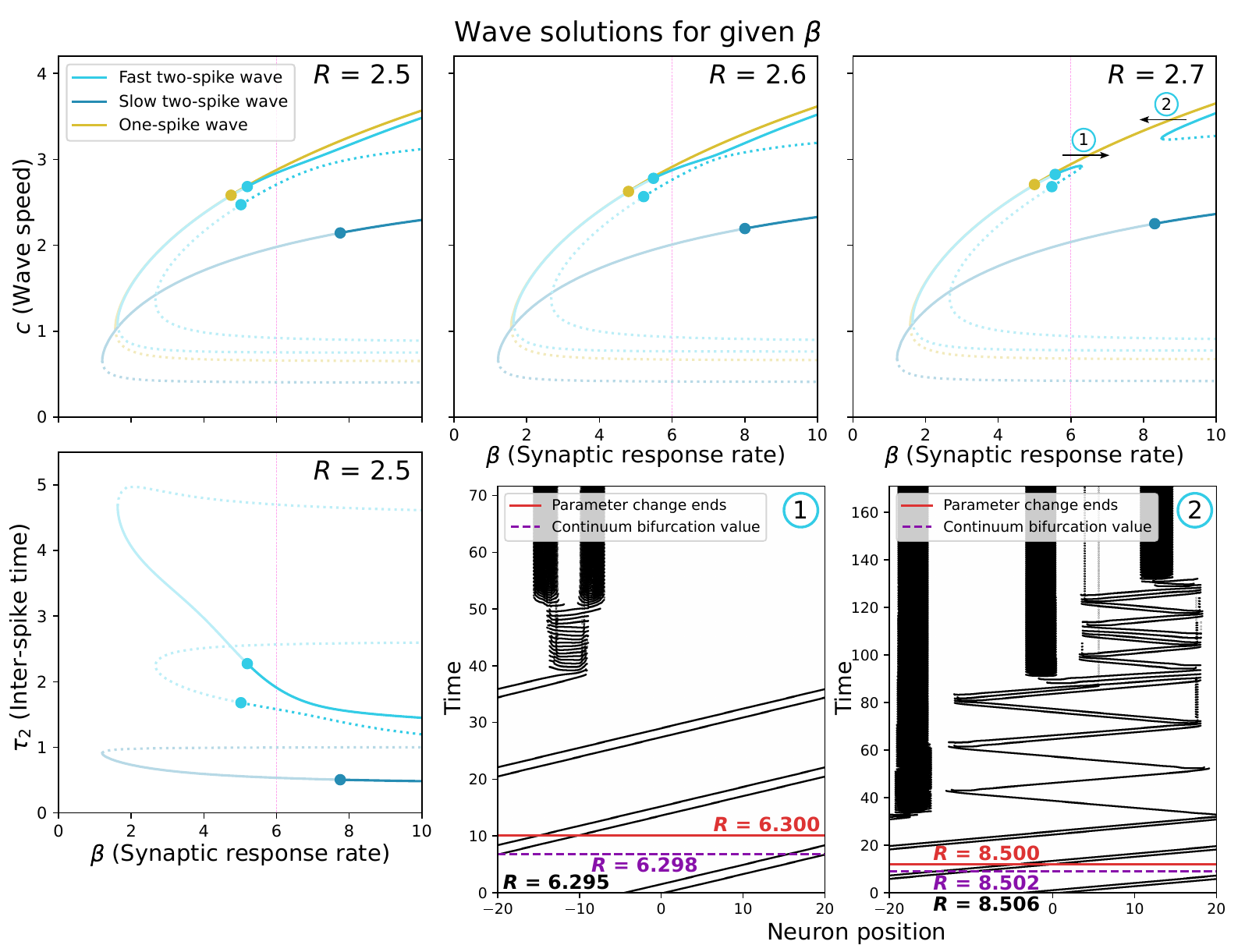}
    \caption[Bifurcation diagrams in $\beta$]{Bifurcation diagrams for one- and two-spike wave solutions as $\beta$ varies, shown for $R = 2.5$, $R = 2.6$ and $R = 2.7$.  Between $R = 2.6$ and $R = 2.7$ two solution branches change connections through a fold bifurcation, producing a gap in which wave solutions cease to exist.  Branches are coloured and annotated as in Figure \ref{fig:PAC-solutions-2spike}, with the pink dashed line marking the value $\beta = 6$ corresponding to the plot in said figure. Numbered panels show simulations on 4,000 neurons as $\beta$ is linearly changed across the fold bifurcations that appear in $R = 2.7$, as in Figure \ref{fig:diptych_R}.}
    \label{fig:triptych-beta}
\end{figure}

\noindent We now examine wave solutions under variation of parameter $\beta$, as illustrated in Figure \ref{fig:triptych-beta}, which shows one- and two-spike wave solutions as functions of $\beta$ for three values of $R =2.5$, $2.6$, $2.7$. 
We observe that the only admissible solutions occur at higher values of $\beta$ and have wave speeds $c$ increasing as functions of $\beta$, in line with the branches found in~\cite{Avitabile2023} for the case with $R=0$.
As $R$ increases across the three diagrams, the stable and unstable branches of the fast two-spike wave solution pinch together and produce two fold bifurcations.  For our chosen parameter values, admissible stable solutions exist on either side of this split, meaning we have an intermediate gap in the values of $\beta$ for which we can find this two-spike solution.  This contrasts with the results for $R=0$ in~\cite{Avitabile2017}, wherein no folds other than those at the extreme left of Figure \ref{fig:triptych-beta} were observed along wave branches.

\label{sec:Rbeta_bif}
\begin{figure}
    \centering
    \includegraphics[width=\textwidth]{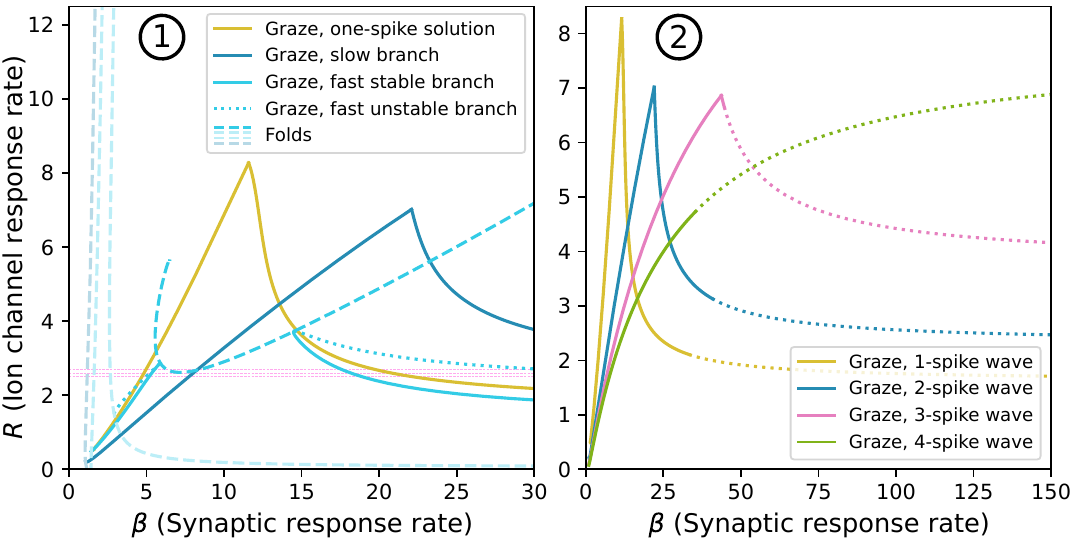}
    \caption{Bifurcation diagrams in $R$ and $\beta$. 1: Values of $R$ and $\beta$ at which graze and fold bifurcations occur for one- and two-spike wave solutions, akin to Figure \ref{fig:PAC-solutions-2spike}.  Pink dashed lines correspond to the values of $\beta$ producing the plots in Figure \ref{fig:triptych-beta}.  As shown in Figure \ref{fig:triptych-beta}, each branch of solutions folds at a low value of $\beta$; the graze is only tracked along the upper branch.  2: Grazing bifurcations exhibiting a double graze for 1-, 2- and 3-spike waves, and the grazing bifurcations for the analogous 4-spike wave.  The transition to dotted lines indicates the co-incidence of a transcritical bifurcation.}
    \label{fig:Rbeta_diagram}
\end{figure}

Figure \ref{fig:Rbeta_diagram}.1 tracks the bifurcations observed in Figure \ref{fig:triptych-beta} across the $(R, \beta)$ parameter plane following a similar convention as Figure \ref{fig:RD_diagram}.
As in Figure \ref{fig:RD_diagram}, we observe that the grazes of the fast two-spike wave closely approximate those of the one-spike wave until the former are interrupted by a fold bifurcation around the region defined by the one-spike double graze.  The double grazes of the one-spike and two-spike atomic solutions are more prominent in this view, defining intermediate regions of $\beta$ for which admissible solutions exist at higher values of $R$ than occurs at higher or lower values of $\beta$.
The double graze for the two-spike solution occurs at a higher value of $\beta$ than the double graze for the one-spike solution; in Figure \ref{fig:Rbeta_diagram}.2, we observe that this trend continues for the atomic three-spike wave, but the double graze for the atomic four-spike wave does not occur as $\beta$ increases before the timescales of $u$ and $s$ begin to separate and, furthermore, the solutions lose stability to a transcritical bifurcation.  Here, we remark that the previous results on stability of waves in the $R=0$ case were dominated by Hopf bifurcations~\cite{Avitabile2023}, representing oscillatory instabilities, rather than transcritical bifurcations with positive real roots.

\subsection{Simulations across distinct bifurcation scenarios under variation of the synaptic timescale}
\noindent We now repeat the numerical simulations as in Figure \ref{fig:diptych_R} except now varying $\beta$ instead of $R$.  Figures \ref{fig:triptych-beta}.1 and \ref{fig:triptych-beta}.2 show numerical simulations as $\beta$ is increased or decreased across the lower and upper folds present in the $R=2.7$ case, respectively.
We again see the slow transition characteristic of fold dynamics and delayed bifurcation phenomena under slow parameter variation.  However, we now see in Figure \ref{fig:triptych-beta}.1 that the initial bump is transient, with its highly synchronised structure breaking up into two secondary bumps.  In Figure \ref{fig:diptych_R}.2, the backwards-travelling component of the initial graze does not terminate at the bump edge, but instead continues as a one-spike wave.
When this one-spike wave loops around and meets the bump once more, instead of transitioning to a second bump as in Figure \ref{fig:diptych_R}.3, it instead reflects off the bump.  This reflection repeats multiple times, and the wave variably gains and loses spikes, eventually generating two additional bumps before terminating.  We additionally see small numbers of neurons locking into a bistable tonic firing state, in some cases persisting even as waves move across the neuron.

Reflection or deflection in laterally-inhibited spiking systems has been previously observed in one \cite{Avitabile2023} and two \cite{Gong2012} spatial dimensions, with waves and bumps maintaining a clear aura of nonfiring neurons due to their inhibitory influence.  In the latter model, the refractory period of the neuron was found to be a key parameter in enabling deflections, as a shorter refractory period allowed the firing pattern to re-enter its own wake.  Our model lacks an explicit refractory period, and we instead see that the post-firing voltage maxima provide a natural re-ignition point when lateral inhibition is cut off as the wave terminates.  However, analysis of what determines whether this leads to a bump or to a reflected wave is left to future work.

\section{Discussion}
\label{sec:discussion}
\noindent In this paper, we have investigated dynamics of a spiking neural field model comprising a synaptically-coupled network of integrate-and-fire neurons, extending results in~\cite{Laing2001,Avitabile2023}, by incorporating subthreshold oscillations, generated for example through adaptation ion currents, into the local dynamics.
We have analysed the dynamics of this system through a combination of bifurcation analysis and numerical simulation.
For the former, we have extended the stability results in~\cite{Avitabile2023} for travelling waves in spiking neural fields; for the latter, we have developed a novel, GPU-accelerated simulation algorithm that takes advantage of the form of the IF network to update the state of the network in an efficient and accurate manner.

We find that the slow adaptive ion channel increases wave speed through its interaction with the lateral inhibition modelled by the Mexican hat connectivity function \eqref{eq:mexican-hat}, while also accentuating the decrease in wave speed present in composite waves as spike count $m$ increases.  We also see that as $R$ increases, these waves gain stability through Hopf bifurcations while being rendered inadmissible through grazing bifurcations.

Interestingly, we find an intricate structure organising two-spike travelling wave solutions with different wave speeds.
These waves may be thought of as a composite of one spike waves that become `locked' together.
This locking behaviour is tightly linked to the natural frequencies of the individual neurons, and is similar to the behaviour observed in fluid dynamics models in which the tails of travelling convection cells become bound to each other~\cite{bergeonDynamicsFormationLocalized2010}.
This phenomenon may also play a role in discriminating situations in which travelling waves are reflected by bump attractors and those in which the waves are terminated as they reach the bump.
Although detailed analysis of this scenario is challenging since there is no general mathematical definition of what constitutes a bump in a spiking neural field model, progress could be made by considering specific types of bump solution, such as that demonstrated by recent analysis of the Lighthouse model~\cite{Coombes2025}.

Our IF network model is an extension of the LIF model considered in previous studies of travelling waves and bump attractors~\cite{Laing2001, Avitabile2023}.
Despite its additional complexity, the revised model is still a considerable simplification of more realistic models such as those following the Hodgkin--Huxley formalism, for which travelling wave solutions can be constructed after taking certain limits~\cite{muratovQuantitativeApproximationScheme2000}.
Finding travelling wave solutions in more realistic neuron models away from such limits is a difficult task, and often requires resorting to numerical approaches for both wave construction and stability.
One possible approach to overcome this barrier is to construct local dynamics using \textit{dynamic input conductances}~\cite{drionDynamicInputConductances2015}, which consolidate all neuronal currents into fast, slow, and ultraslow components, with timescales that potentially overlap.
Adopting such an intermediate complexity approach may allow for analytical progress in studying neural waves whilst providing a means to link the results back to biological mechanisms such as the relative expression of specific types of ion channel~\cite{brandoitFastReconstructionDegenerate2025a}.

Beyond the simplified single-neuron dynamics, we also take a simplified connectivity kernel $w$, combining the excitatory and inhibitory coupling into a single Mexican hat function.
Although this simplifies analysis, such an approach may miss important network dynamics.
For example, networks involving distinct excitatory and inhibitory populations have been shown to support slow one-spike waves in which inhibitory cells fire in advance their excitatory counterparts~\cite{Golomb2001, Golomb2002}.
Similarly, relaxation to bump attractors following perturbation has been shown to be delayed in coupled excitatory--inhibitory networks due to the repelling effect of the inhibitory population's bump on the excitatory population's bump~\cite{cihakDistinctExcitatoryInhibitory2022}.
Consideration of a separated excitatory--inhibitory network may be particularly relevant when considering local dynamics that represents individual spiking neurons so as to adhere to Dale's principle that neurons secrete only one primary type of neurotransmitter.

The bulk of our analysis in this work focuses on travelling waves with a small number of spikes, for which we see the impact of the natural frequency of the subthreshold oscillation on the `locking' behaviour in composite waves.
For travelling waves with higher numbers of spikes---essentially, travelling bursts of spikes---it is possible that the natural frequency, or even a resonant frequency, may play a different role in determining either the existence or stability of different types of wave.  Alternatively, it may simply modify the basins of attraction of such solutions.

In a similar vein, we have only considered travelling wave solutions in which every neuron in the domain fires the same number of times as the wave propagates.
Propagating waves in which not all neurons participate, or those in which neurons may fire a different number of times from each other, are commonly observed in biological neural networks.
These behaviours may be generated through stochasticity, heterogeneity, or anisotropy of the network.
One approach for addressing the first of these options could be to use Hawkes processes~\cite{locherbachSpikingNeuronsInteracting2017} to describe the local dynamics.
This approach essentially describes how the propensity of neuronal firing changes in response to synaptic inputs, treating the firing times themselves as random variables.
More complex spatiotemporal waves such as \textit{lurching waves} have been observed in networks with separated excitatory and inhibitory populations~\cite{golombContinuousLurchingTraveling1999}.

Given the consideration of oscillatory dynamics at the neuron scale, it is pertinent to ask how oscillations may operate across scales, particularly since it is known that the collective activity of neural populations generates rhythms in frequency bands that are associated with distinct functional states.
One prominent example of cross-scale resonances involves the generation of slow cortical waves, which are thought to play an important role in memory consolidation during sleep~\cite{miyamotoRolesCorticalSlow2017, neskeSlowOscillationCortical2016}.
These low-frequency waves, which propagate at speeds of around 2--7 ms$^{-1}$~\cite{massiminiSleepSlowOscillation2004}, are generated by local spiking activity that takes place at much higher frequency than the waves themselves.
The generation of such waves involve transitions of the local dynamics between a highly excitable and a less excitable state, referred to as \textit{up} and \textit{down} states respectively~\cite{neskeSlowOscillationCortical2016}.
Such solutions may be generated in our model through the incorporation of a further ultraslow timescales to mediate the transition between up and down states such that the local dynamics exhibits bursting behaviour \cite{Sciamanna2011}.
In general, extending our analysis to a model with three-dimensional local dynamics is a non-trivial task, since it introduces the possibility of generating more complex spatiotemporal patterns.
However, some progress may be possible by choosing piecewise linear models of bursting behaviour, such as that put forward in~\cite{Desroches2016}, which allow for construction of semi-explicit solutions in a similar vein to the approach taken in this manuscript.

Although our analysis has is restricted to a one-dimensional tissue, the analysis and numerical simulation algorithms can be extended to higher dimensions.
Simulations of our network in two-dimensions (not shown) demonstrate the existence of a variety of bumps and wave solutions, including plane waves and localised travelling structures resembling the gliders associated with the Game of Life, which have also been observed in cellular automata representations of neural networks~\cite{Gong2012}.
Analysis of such solutions will involve considering different classes of perturbation to wavefronts that incorporate the different instabilities that can occur in higher dimensions~\cite{bressloffLaminarNeuralField2015, visserStandingTravellingWaves2017}.
Exploration of such models will be important to capture any relevant information regarding local geometry of \textit{in vivo} neural tissue, or \textit{in vitro} organoid models, as has been done for the macroscopic Amari-style neural fields over arbitrary surfaces~\cite{shawRadialBasisFunction2025}.

\subsection*{Acknowledgements}

This work was supported by the Engineering and Physical Sciences Research Council.
For the purpose of open access, we have applied a Creative Commons Attribution (CC BY) license to any Author Accepted Manuscript version arising from this submission. 

\printbibliography
\clearpage

\appendix

\section{Simplifying the single-neuron equations}
\label{app:vu-noinput}

\noindent Equations \ref{eq:v-no-input} and \ref{eq:u-no-input} for the single-neuron values of $v$ and $u$ between events are somewhat unwieldy.  We may simplify these equations by following the general addition formula for trigonometric functions (with $X \neq 0$),
\begin{equation}
    X \cos(t) + Y \sin(t) = 
    \text{sgn}(X)\sqrt{X^2 + Y^2} \cos\left(t - \arctan \left( \frac{Y}{X} \right) \right),
\end{equation}
whereby for 
\begin{align}
    X_1 &= v_k(0) - s_k(0) \frac{2p - (\beta + 1)}{(p - \beta)^2 - q^2} - I \frac{2p - 1}{p^2 - q^2},\\
    Y_1 &= \frac{1}{|q|} \bigg(
    s_k(0) \frac{p^2 + q^2 -p(\beta + 1) + \beta}{(p - \beta)^2 - q^2}
     + I \frac{p^2 + q^2 - p}{p^2 - q^2}
     + v_k(0)(1 - p) + u_k(0) \bigg),\\
    X_2 &= u_k(0) - R \bigg(
    \frac{s_k(0)}{(p - \beta)^2 - q^2} + \frac{I}{p^2 - q^2}\bigg),\\
    Y_2 &= \frac{1}{|q|}\bigg(R \bigg(
    s_k(0) \frac{p + \beta}{(p - \beta)^2 - q^2} + I \frac{p}{p^2 - q^2} - v_k(0)\bigg)
     + (p - 1) u_k(0) \bigg),
\end{align}
we can further derive the following parameters
\begin{align}
    K_1 &= \sqrt{X_1^2 + Y_1^2}, & 
    K_4 &= \sqrt{X_2^2 + Y_2^2},\\
    \theta_1 &= \arctan \left( \frac{Y_1}{X_1} \right), &
    \theta_2 &= \arctan \left( \frac{Y_2}{X_2} \right),\\
    K_2 &= s_0 \frac{2p - (\beta + 1)}{(p - \beta)^2 - q^2}, &
    K_5 &= \frac{Rs_0}{(p - \beta)^2 - q^2},\\
    K_3 &= I \frac{2p - 1}{p^2 - q^2}, &
    K_6 &= \frac{RI}{p^2 - q^2},
\end{align}
to write
\begin{align}
    v_k(t) &= K_1 e^{-pt} \cos(|q|t + \theta_1) + K_2 e^{-\beta t} + K_3,\\
    u_k(t) &= K_4 e^{-pt} \cos(|q|t + \theta_2) + K_5 e^{-\beta t} + K_6.
\end{align}

\section{Numerical simulation}
\label{app:numerics}
\noindent For the purposes of simulating our discrete model \eqref{eq:v_discrete}--\eqref{eq:s_discrete} and using it to verify analysis of our continuum model \eqref{eq:v_continuum}--\eqref{eq:s_continuum}, we have constructed an event-based simulator for our network.  From given initial conditions, our neurons evolve under equations \eqref{eq:v_discrete}--\eqref{eq:s_discrete} with $f_n^\text{in} = 0$ and no voltage resets until a neuron fires somewhere within the network.  This means their behaviour is described by the analytical solutions \eqref{eq:v-no-input}--\eqref{eq:u-no-input} at times when there are no neurons firing.  As such, we can solve \eqref{eq:v-no-input} to find the next time we attain $v_n = v_\text{th}$ for each neuron $n$ then update the variables across the entire network to that firing time, reset the firing neuron, and propagate the synaptic signal.  By repeating this process we can evolve the state of the network without discretising the governing differential equations, enabling us to find our network firing times to machine precision.

As discussed in Appendix \ref{app:vu-noinput}, our solved equation \eqref{eq:v-no-input} for each $v$ (hereon omitting the neuron indexing) between firing events and with no inputs (i.e. $f_\text{in} \equiv 0)$ has the form
\begin{equation}
    v(t) = K_1 e^{-pt} \cos(|q|t + \theta_1) + K_2 e^{-\beta t} + K_3,
    \label{eq:v_simple}
\end{equation}
where the real constants $K_1$, $K_2$, $K_3$, $\theta_1$ are known, and $K_1$, $K_2$, $\theta_1$ depend upon the initial conditions of $v$, $u$ and $s$; without loss of generality, we take our initial condition to occur at $t_0 = 0$.  As long as we can find the least $t^* \geq 0$ such that $v(t^*) = v_\text{th}$, we can find the next firing time of the neuron, and thus evolve our network in the event-based manner described above.  However, our equation for $v(t)$ does not admit an analytical inverse, and its oscillatory nature hinders us in establishing both existence of the root and basins of root-finding convergence, which prevents us from applying common root-finding algorithms that expect such information to be known \cite{Chapra2005, Gerald1999, Froberg1985, Conte1980, Hornbeck1975}.  As such, we have designed a custom root-finding algorithm to meet our needs, similar to the more involved approach of \textcite{Makino2003}.

The first stage of our algorithm identifies a finite interval outside of which $t^*$ cannot exist.  The second stage of our algorithm conducts a forward search through said interval that will either converge to the root $t^*$ or present a signal in a finite number of steps to indicate that no root exists.

\subsection{Constructing a solution interval}
We have set our initial condition at $t = 0$, and $v(0) < v_\text{th}$ by design, so we know trivially that $t^* > 0$.  From \eqref{eq:v_simple}, we see that as $t \to \infty$, $v(t) \to K_3$.  For the general case $K_3 \neq v_\text{th}$, we can bound this convergence by
\begin{equation}
    |v(t) - K_3| \leq |K_1|e^{-pt} + |K_2|e^{-\beta t} \leq \big(|K_1| + |K_2|\big)e^{-\min\{p, \beta\} t},
\end{equation}
making use of $p, \beta, t \geq 0$.  This allows us to find a bound $T > 0$ such that $\forall t > T$ we have $|v(t) - K_3| < |v_\text{th} - K_3|$, given by
\begin{equation}
    T = \frac{1}{\min\{p, \beta\}} \log \bigg( \frac{|v_\text{th} - K_3|}{|K_1| + |K_2|} \bigg).
\end{equation}
There can be no solution $v(t^*) = v_\text{th}$ for $t^* > T$, so either $t^* \in [0, T]$ or $t^*$ does not exist.  We do not consider the case $K_3 = v_\text{th}$ as it represents a measure-zero subset of all possible parameters.  

\subsection{Forward-search root-finding algorithm}
\label{numerics:rootfinder}
We have a finite interval $[0, T]$, but we do not know whether a root $t^*$ exists in this interval. To approach this situation, we seek to construct a sequence $\{t_n\}_{n \in \mathbb{Z}}$ such that $t_n \to t^*$ if $t^*$ exists, and that if $t^*$ does not exist we find a $t_n > T$ or demonstrate that $v(t)$ is nonincreasing for $t > t_n$ to clearly signal the nonexistence of a solution.  To achieve this, we make use of a modified Newton-Raphson method with iterative formula
\begin{equation}
    t_{n+1} = t_{n} + \frac{v_\text{th} - v(t_n)}{m_n}, \quad n = 0,1,2, \dots
    \label{eq:root_find}
\end{equation}
where $t_0 = 0$, $v(t_0) < v_\text{th}$ and $m_n$ is given by
\begin{equation}
    m_n = \min \bigg\{M_n,\ \frac{1}{2}v'(t_n) + \frac{1}{2}\sqrt{v'(t_n)^2 + 4 \big(v_\text{th} - v(t_n) \big)M'_n } \bigg\},
\end{equation}
for
\begin{equation}
\begin{split}
    M_n =&\ (p + |q|)|K_1| e^{-pt_n}  + \beta \max\left\{- K_2 e^{-\beta t_n}, - K_2 e^{-\beta T}\right\},\\
    &\geq \sup\big\{ v'(t) : t \in [t_n, T] \big\}, \label{eq:Mn}
\end{split}
\end{equation}
\begin{equation}
\begin{split}
    M'_n =&\ \max\left\{(p + |q|)^2 |K_1| e^{-pt_n} + \beta^2 
    \max\left\{K_2 e^{-\beta t_n}, K_2 e^{-\beta T}\right\}, 0\right\},\\
    &\geq \sup\big\{ v''(t) : t \in [t_n, T] \big\}. \label{eq:M'n}
\end{split}
\end{equation}
We halt when $t_{n+1} - t_n < \epsilon_\text{converge}$ for some small constant $\epsilon_\text{converge}$ as a proxy for convergence, as standard.  Additionally, halt if $t_{n+1} > T$ to signify that we have exited the interval.  We also halt immediately if $M_n \leq 0$, as then $v(t)$ is decreasing on the remainder of the interval and no root will be found.  We shall prove below that $v(t_n) < v_\text{th}$ and $M_n \geq \sup\{v'(t) : t \in [t_n, T]\}$, so if $M_n \leq 0$ then $v(t)$ is nonincreasing on the remainder of the interval, and as such there can be no solution $v(t^*) = v_\text{th}$.  We set $M'_n \geq 0$ to ensure that our second option for the value of $m_n$ always evaluates as real.

This algorithm functions by setting $m_n$ as an upper bound on $v'(t)$ for $t \in [t_n, t_{n+1}]$ so that if $t^*$ exists then each iteration $t_{n+1}$ is an underestimate of $t^*$.  This prevents the algorithm from overshooting any roots, and leads to its convergence to $t^*$ as proven below.  Our choosing the lesser of two bounds for the value of $m_n$ is motivated by improving the rate of convergence; near to the root, $v_\text{th} - v(t_n)$ is small, so the $M'_n$-derived option converges to the original Newton--Raphson method, thus recapturing its quadratic convergence.  Further from the root, the bound $M_n$ is smaller and thus gives larger step sizes $t_{n+1} - t_n$.

To prove that this method has the properties we require, Lemmas \ref{lemma:Mn_bound} and \ref{lemma:M'n_bound} guarantee that $m_n$ overestimates $v'(t)$ on the interval $[t_n, t_{n+1}]$, while Lemma \ref{lemma:bounded} ensures $\{t_n\}$ is bounded above by $t^*$ if it exists.  This then enables Proposition \ref{prop:converges} to show that $t_n \to t^*$ if $t^*$ exists, giving us our solution, and Proposition \ref{prop:diverges} to show that $t_n \to \infty$ if $t^*$ does not exist, allowing us to detect our sequence $\{t_n\}$ exiting the interval $[0, T]$ to show no solution exists.

\begin{lemma}
\label{lemma:bounded}
    Suppose $\exists t^*$ such that $v(t^*) = v_\text{th}$.  Let $t_n < t^*$, $v(t_n) < v_\text{th}$, $m_n > 0$.  If $m_n \geq v'(t)\ \forall t \in [t_n, t_{n+1}]$, then $t_{n+1} < t^*$.
\end{lemma}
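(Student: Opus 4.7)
The approach is a standard tangent-line argument adapted to the one-sided slope bound. The key observation is that using $m_n$ as an upper bound on $v'$ means the affine function $L(t) = v(t_n) + m_n(t-t_n)$ lies above $v$ on $[t_n, t_{n+1}]$, so the time at which $L$ crosses $v_\text{th}$ (which is exactly $t_{n+1}$) cannot exceed the time at which $v$ itself crosses $v_\text{th}$.

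My plan is to proceed by contradiction. First I would note that $t_{n+1} > t_n$ follows immediately from the update rule \eqref{eq:root_find} together with the hypotheses $v(t_n) < v_\text{th}$ and $m_n > 0$, which rearranges to $t_{n+1} - t_n = (v_\text{th} - v(t_n))/m_n > 0$. Then, assuming for contradiction that $t_{n+1} \geq t^*$, I would place $t^*$ in the interval $(t_n, t_{n+1}]$ and apply the mean value theorem to $v$ on $[t_n, t^*]$ to obtain some $\xi \in (t_n, t^*) \subseteq [t_n, t_{n+1}]$ satisfying
\begin{equation*}
    v_\text{th} - v(t_n) = v(t^*) - v(t_n) = v'(\xi)(t^* - t_n).
\end{equation*}

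Next I would invoke the hypothesis $m_n \geq v'(t)$ on $[t_n, t_{n+1}]$ to deduce $v'(\xi) \leq m_n$, and note that the left-hand side above is strictly positive, forcing $v'(\xi) > 0$. Dividing through and using the definition of $t_{n+1}$ from \eqref{eq:root_find} then gives
\begin{equation*}
    t^* - t_n = \frac{v_\text{th} - v(t_n)}{v'(\xi)} \geq \frac{v_\text{th} - v(t_n)}{m_n} = t_{n+1} - t_n,
\end{equation*}
whence $t^* \geq t_{n+1}$. Combined with the contradiction hypothesis this yields $t^* = t_{n+1}$, but then the MVT application forces $v'(\xi) = m_n$, and integrating the slope bound $v'(t) \leq m_n$ on $[t_n,t_{n+1}]$ would have to hold with equality almost everywhere, which is incompatible with $v$ being the nondegenerate analytic expression \eqref{eq:v_simple} (the exponentially decaying components prevent $v'$ from being identically constant on any open interval).

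The only subtlety is the borderline equality case $t^* = t_{n+1}$; most of the work is routine MVT bookkeeping, and the main thing to be careful about is that $\xi$ actually lies in the interval where the slope bound $m_n \geq v'$ is assumed to hold, which is automatic once we place $t^* \leq t_{n+1}$. The analyticity argument for the degenerate case is the only place where one uses the specific form of $v$ rather than treating it as an abstract $C^1$ function.
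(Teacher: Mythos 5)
Your proposal is correct and rests on the same two ingredients as the paper's proof: bounding the growth of $v$ over $[t_n,t_{n+1}]$ by the slope bound $m_n$ (the paper integrates $v'$ directly where you invoke the mean value theorem, a cosmetic difference), and disposing of the borderline equality case $t^*=t_{n+1}$ by noting that it would force $v'\equiv m_n$ on the interval, which is impossible since $v$ in \eqref{eq:v_simple} is not affine. The paper phrases it directly (showing $v(t)<v_\text{th}$ throughout $[t_n,t_{n+1}]$) rather than by contradiction, but the content is identical.
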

\begin{proof}
    For $t \in [t_n, t_{n+1}]$ we have
    \begin{equation}
        v(t) = v(t_n) + \int_{t_n}^t v'(\tau) d\tau \leq v(t_n) + (t - t_n)m_n \leq v(t_n) + (t_{n+1} - t_n)m_n = v_\text{th}. 
    \end{equation} 
    Equality can only be achieved if both $t = t_{n+1}$ and $m_n = v'(t)$ $\forall t \in [t_n, t_{n+1}]$.  As $v$ is not a linear function, $v(t) < v_\text{th}$ $\forall t \in [t_n, t_{n+1}]$ and so $t_{n+1} < t^*$.
\end{proof}

\begin{lemma}
\label{lemma:Mn_bound}
    For a given iteration $n$, $M_n > v'(t)$ $\forall t \in [t_n, T]$.
\end{lemma}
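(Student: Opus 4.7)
My approach is direct: differentiate the closed form \eqref{eq:v_simple} and bound the resulting terms uniformly on $[t_n, T]$. First I would compute
\begin{equation*}
v'(t) = -p K_1 e^{-pt}\cos(|q|t + \theta_1) - |q| K_1 e^{-pt}\sin(|q|t + \theta_1) - \beta K_2 e^{-\beta t},
\end{equation*}
and split $v'$ into the two contributions involving $K_1$ and $K_2$ respectively, bounding each separately. Because these two terms have different exponential decay rates ($p$ versus $\beta$) and different qualitative behaviour (oscillating versus monotone), it is cleaner to treat them independently and recombine at the end.

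For the $K_1$-contribution, the triangle inequality together with $|\cos|, |\sin| \leq 1$ yields the uniform bound $(p + |q|)|K_1|$ on the trigonometric factor. Since $p \geq 0$, the prefactor $e^{-pt}$ is nonincreasing in $t$, so $e^{-pt} \leq e^{-pt_n}$ for all $t \in [t_n, T]$. The resulting upper bound $(p + |q|)|K_1|\,e^{-pt_n}$ is exactly the first summand in $M_n$.

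For the $K_2$-contribution $-\beta K_2 e^{-\beta t}$, the key observation is that this is a monotone function of $t$ whose direction of monotonicity depends only on the sign of $K_2$. On the compact interval $[t_n, T]$ its maximum must therefore be attained at one of the two endpoints, giving
\begin{equation*}
-\beta K_2 e^{-\beta t} \leq \beta \max\{-K_2 e^{-\beta t_n},\; -K_2 e^{-\beta T}\},
\end{equation*}
which is precisely the second summand in $M_n$. Summing the two bounds yields $v'(t) \leq M_n$ for all $t \in [t_n, T]$.

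The only genuinely delicate point is upgrading this to the strict inequality stated in the lemma. The sharper bound on the oscillatory term, $p|\cos\theta| + |q||\sin\theta| \leq \sqrt{p^2 + q^2}$, is strictly less than $p + |q|$ whenever both $p > 0$ and $|q| > 0$, i.e.\ throughout the oscillatory regime $4R > (D-1)^2$ assumed in the paper. Provided $K_1 \neq 0$, this opens a strict gap at every $t \in [t_n, T]$, so $M_n > v'(t)$ holds uniformly. In the degenerate case $K_1 = 0$ only weak inequality is obtained at the appropriate endpoint; this does not affect correctness because the downstream use in Lemma \ref{lemma:bounded} already only requires $m_n \geq v'(t)$.
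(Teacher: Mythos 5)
Your proof is correct and follows essentially the same route as the paper's: differentiate the closed form, bound the oscillatory $K_1$ term by $(p+|q|)|K_1|e^{-pt_n}$ via the triangle inequality and monotonicity of $e^{-pt}$, and bound the monotone $K_2$ term by its maximum over the endpoints $\{t_n, T\}$. Your treatment of strictness (via $\sqrt{p^2+q^2} < p+|q|$ for $p,|q|>0$) and your flagging of the degenerate case $K_1=0$ are in fact more careful than the paper's terse justification, and your observation that only the weak inequality is needed downstream is accurate.
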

\begin{proof}
    From Equation \eqref{eq:v_simple} we have
    \begin{equation}
        v'(t) = \big(-p\cos(|q|t + \theta) - |q|\sin(|q|t + \theta)\big)K_1e^{-pt} - \beta K_2 e^{-\beta t}.
    \end{equation} 
    As $p, \beta > 0$ and $\cos(|q|t + \theta) \neq \sin(|q|t + \theta)$ we have
    \begin{equation} 
        v'(t) < (p + |q|)|K_1|e^{-pt} - \beta K_2 e^{-\beta t},
    \end{equation} 
    and as $t_n < t < T$ we have
    \begin{equation} 
        (p + |q|)|K_1|e^{-pt} - \beta K_2 e^{-\beta t} \leq (p + |q|)|K_1|e^{-pt_n} + \beta \max\left\{-K_2 e^{-\beta t_n}, -K_2 e^{-\beta T}\right\} = M_n,
    \end{equation} 
    where the choice between $e^{-\beta t_n}$ and $e^{-\beta T}$ depends upon the sign of $K_2$.  Therefore $M_n > v'(t)$.
\end{proof}

\begin{lemma}
\label{lemma:M'n_bound}
    For a given iteration $n$, let $m_n = \frac{1}{2}v'(t_n) + \frac{1}{2}\sqrt{v'(t_n)^2 + 4 \big(v_\text{th} - v(t_n) \big) M'_n}$.
    Then $m_n \geq v'(t)$ $\forall t \in [t_n, t_{n+1}]$.
\end{lemma}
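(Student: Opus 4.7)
The plan is to use $M'_n$ as an upper bound on $v''(t)$ across $[t_n, T]$, which was already established in \eqref{eq:M'n} (and is enforced to be nonnegative by the $\max\{\cdot, 0\}$ in its definition). Integrating this bound from $t_n$ yields $v'(t) \leq v'(t_n) + (t - t_n) M'_n$ for every $t \in [t_n, T]$, and in particular for $t \in [t_n, t_{n+1}]$. It therefore suffices to show that
\begin{equation*}
v'(t_n) + (t_{n+1} - t_n) M'_n \leq m_n,
\end{equation*}
since this upper bound is nondecreasing in $t$.

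Next I would substitute the Newton-like update $t_{n+1} - t_n = (v_\text{th} - v(t_n))/m_n$ from \eqref{eq:root_find}, which is valid because $m_n > 0$ is enforced before iterating (the algorithm halts otherwise) and $v(t_n) < v_\text{th}$ is maintained inductively via Lemma \ref{lemma:bounded}. The desired inequality then becomes
\begin{equation*}
v'(t_n) + \frac{(v_\text{th} - v(t_n)) M'_n}{m_n} \leq m_n,
\end{equation*}
which upon multiplying through by $m_n > 0$ rearranges to the quadratic condition
\begin{equation*}
m_n^2 - v'(t_n) m_n - (v_\text{th} - v(t_n)) M'_n \geq 0.
\end{equation*}

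Finally I would observe that since $v_\text{th} - v(t_n) > 0$ and $M'_n \geq 0$, the discriminant $v'(t_n)^2 + 4(v_\text{th} - v(t_n)) M'_n$ is nonnegative (so the square root in the definition of $m_n$ is real), and the quadratic $x^2 - v'(t_n) x - (v_\text{th} - v(t_n)) M'_n$ is nonnegative precisely when $x$ is at least its larger root. That larger root is exactly the value $\frac{1}{2}v'(t_n) + \frac{1}{2}\sqrt{v'(t_n)^2 + 4(v_\text{th} - v(t_n)) M'_n}$ assigned to $m_n$, so the inequality holds with equality at the boundary and the claim follows.

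The argument is almost entirely routine; the only subtle point is ensuring the preconditions $m_n > 0$ and $v(t_n) < v_\text{th}$ hold, which is handled by the halting conditions of the algorithm together with the invariant from Lemma \ref{lemma:bounded}. The design of $m_n$ as the positive root of exactly this quadratic is what makes the calculation close cleanly, so there is no real obstacle once that motivation is identified.
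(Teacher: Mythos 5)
Your proposal is correct and follows essentially the same route as the paper's proof: bound $v''$ by $M'_n$, integrate to get $v'(t) \leq v'(t_n) + (t_{n+1}-t_n)M'_n$, substitute the Newton-type step $t_{n+1}-t_n = (v_\text{th}-v(t_n))/m_n$, and recognise $m_n$ as the larger root of the resulting quadratic. Your additional remarks on the preconditions $m_n > 0$ and $v(t_n) < v_\text{th}$ and on the nonnegativity of the discriminant are sensible bookkeeping that the paper leaves implicit, but they do not change the argument.
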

\begin{proof}
    Following Lemma \ref{lemma:Mn_bound}, from Equation \eqref{eq:v_simple} we have 
    \begin{equation}
        v''(t) = (-p^2 - |q|^2)\cos(|q|t + \theta)K_1e^{-pt} + \beta^2 K_2 e^{-\beta t},
    \end{equation} 
    giving us
    \begin{equation} 
        v''(t) \leq (p^2 + |q|^2)|K_1|e^{-pt_n} + \beta^2 \max\left\{K_2 e^{-\beta t_n}, K_2 e^{-\beta T} \right\} \leq M'_n, 
    \end{equation} 
    This leads us to
    \begin{equation}
    \begin{split}
        v'(t) =&\ v'(t_n) + \int_{t_n}^t v''(\tau)d\tau \leq v'(t_n) + (t - t_n)M'_n \\
        &\leq v'(t_n) + (t_{n+1} - t_n) M'_n = v'(t_n) + M'_n\frac{v_\text{th} - v(t_n)}{m_n}.
    \end{split}
    \end{equation}
    We can therefore prove $v'(t) \leq m_n$ by showing
    \begin{equation}
        m_n \geq v'(t_n) + M'_n\frac{v_\text{th} - v(t_n)}{m_n}, 
    \end{equation} 
    which is a quadratic inequality solved by
    \begin{equation}
        m_n \geq \frac{1}{2}v'(t_n) + \frac{1}{2}\sqrt{v'(t_n)^2 + 4 \big(v_\text{th} - v(t_n) \big) M'_n}. 
    \end{equation} 
\end{proof}

\begin{proposition}
\label{prop:converges}
    If $t^*$ exists, then $t_n \to t^*$ as $n \to \infty$.
\end{proposition}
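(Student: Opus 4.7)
The plan is to combine the three lemmas already proved with a monotone-convergence argument. First, I would establish by induction that $t_n < t^*$ and $v(t_n) < v_\text{th}$ for every $n$: the base case is the hypothesis on $t_0$, and the inductive step combines Lemmas~\ref{lemma:Mn_bound}--\ref{lemma:M'n_bound} (which together certify $m_n \geq v'(t)$ on $[t_n, t_{n+1}]$) with Lemma~\ref{lemma:bounded} to deduce $t_{n+1} < t^*$ and hence $v(t_{n+1}) < v_\text{th}$. A subsidiary claim to check along the way is that the algorithm does not terminate prematurely, i.e.\ that $m_n > 0$: since $t_n < t^* \leq T$ and $v(t^*) > v(t_n)$, the mean value theorem yields some $\tilde t \in (t_n, t^*)$ with $v'(\tilde t) > 0$, and Lemma~\ref{lemma:Mn_bound} then forces $M_n > 0$. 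An analogous argument, using the fact that $v'$ cannot be everywhere non-positive on $[t_n, t^*]$, rules out $m_n = 0$ for the alternative expression.

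Second, since $v(t_n) < v_\text{th}$ and $m_n > 0$, the iteration~\eqref{eq:root_find} gives $t_{n+1} > t_n$, so $\{t_n\}$ is strictly increasing and bounded above by $t^*$. It therefore converges to some $t^{**} \in (0, t^*]$, with $t_{n+1} - t_n \to 0$.

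Third, I would identify $t^{**}$ with $t^*$ via a continuity argument. Since the iterates lie in the compact interval $[0, t^*]$, the defining formulae~\eqref{eq:Mn}--\eqref{eq:M'n} for $M_n$ and $M'_n$ are continuous in $t_n$ and hence uniformly bounded, and by continuity of $v$ and $v'$ the alternative expression for $m_n$ is bounded as well. Thus there exists a constant $C' > 0$ independent of $n$ with $m_n \leq C'$, and~\eqref{eq:root_find} yields $t_{n+1} - t_n \geq (v_\text{th} - v(t_n))/C'$. Letting $n \to \infty$ forces $v(t_n) \to v_\text{th}$, and continuity of $v$ then gives $v(t^{**}) = v_\text{th}$. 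Minimality of $t^*$ together with $t^{**} \leq t^*$ forces $t^{**} = t^*$, so $t_n \to t^*$.

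The main obstacle I anticipate is the careful handling of the auxiliary claim that $m_n$ remains strictly positive throughout the iteration whenever $t^*$ exists; the remaining steps are a fairly routine monotone-convergence argument combined with uniform boundedness of $m_n$ to pass to the limit in the iteration formula.
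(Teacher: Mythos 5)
Your proposal is correct and follows essentially the same route as the paper's proof: induction via Lemmas~\ref{lemma:bounded}--\ref{lemma:M'n_bound} to get $t_n < t^*$, monotone convergence to a limit, and then boundedness of $m_n$ in the iteration formula to force $v(\overline{t}) = v_\text{th}$ and identify the limit with $t^*$. The one genuine addition is your explicit mean-value-theorem justification that $m_n > 0$ whenever $t^*$ exists, a point the paper's proof asserts without argument; this is a worthwhile tightening but not a different proof.
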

\begin{proof}
    Assume $t^*$ exists.
    We know that $t^* \in [0, T]$.  
    We have $t_0 = 0$ and $v(0) < v_\text{th}$, so by induction on Lemmas \ref{lemma:bounded}--\ref{lemma:M'n_bound} we have $t_n < t^*$ $\forall n$.  
    We have $m_n > 0$, and $v(t_n) < v_\text{th}$, so from the formula \eqref{eq:root_find} we have $t_n < t_{n+1}$.
    Therefore, $\{t_n\}_{n \in \mathbb{N}}$ is a monotonic increasing sequence bounded from above, and so it must converge to some limit $\overline{t}$.  Then as $n \to \infty$, we have
    \begin{equation} 
        (t_{n+1} - t_n)m_n \to v_\text{th} - v(\overline{t}), 
    \end{equation} 
    and 
    \begin{equation}
        (t_{n+1} - t_n) \to 0. 
    \end{equation} 
    As $\lim_{n \to \infty} (m_n)$ is finite, the only way for the first limit to hold is if $v_\text{th}$ = $v(\overline{t})$, and so by definition $\overline{t} = t^*$.  
\end{proof}

\begin{proposition}
\label{prop:diverges}
    If $t^*$ does not exist, then $t_n \to \infty$ as $n \to \infty$.
\end{proposition}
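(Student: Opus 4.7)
The plan mirrors the structure of Proposition~\ref{prop:converges} but argues by contradiction against a finite accumulation point. Since $t^*$ does not exist, combining $v(0) < v_\text{th}$ with continuity of $v$ and the intermediate value theorem yields $v(t) < v_\text{th}$ for every $t \geq 0$. I shall work under the assumption that $M_n > 0$ for all $n$, since otherwise the algorithm halts with the no-root indicator by design, and the claim becomes vacuous in the algorithmic setting.

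First I would establish that $\{t_n\}$ is strictly increasing. Because $v(t_n) < v_\text{th}$ the numerator in \eqref{eq:root_find} is strictly positive, and $m_n \leq M_n$ together with the explicit formula for the second candidate in the definition of $m_n$ ensures $m_n > 0$ (the second candidate is bounded below by $\tfrac{1}{2}(v'(t_n) + |v'(t_n)|) \geq 0$, with simultaneous vanishing only in the degenerate configuration $v'(t_n) = 0 = M'_n$, which can be excluded by the halting logic). Hence $t_{n+1} > t_n$, and $\{t_n\}$ is monotone.

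Next I would suppose for contradiction that $t_n \not\to \infty$. Being monotone increasing, the sequence would then converge to some finite $\bar{t} \in [0,\infty)$, forcing $t_{n+1} - t_n \to 0$, that is,
\begin{equation*}
    \frac{v_\text{th} - v(t_n)}{m_n} \longrightarrow 0.
\end{equation*}
By continuity of $v$ the numerator converges to $v_\text{th} - v(\bar{t})$, and this limit is strictly positive: if it were zero, then $\bar{t}$ itself would be a root, contradicting the non-existence of $t^*$. Consequently $m_n \to \infty$.

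Finally I would derive a contradiction from the explicit bound \eqref{eq:Mn}. Both $(p + |q|)|K_1| e^{-pt_n}$ and $\beta \max\{-K_2 e^{-\beta t_n},\, -K_2 e^{-\beta T}\}$ are continuous functions of $t_n$ that remain bounded as $t_n \to \bar{t} < \infty$. Hence $M_n$ is bounded, and since $m_n \leq M_n$ the sequence $\{m_n\}$ is bounded as well, contradicting $m_n \to \infty$. Thus the assumption of a finite limit fails, and the monotone sequence $\{t_n\}$ must diverge, proving $t_n \to \infty$. The only real subtlety is the strict separation $v(\bar{t}) < v_\text{th}$, which is exactly where the no-root hypothesis is used; everything else reduces to boundedness of the exponential envelope defining $M_n$.
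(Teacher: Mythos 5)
Your proposal is correct and follows essentially the same route as the paper: show $\{t_n\}$ is strictly increasing, suppose it converges to a finite $\bar{t}$, and derive a contradiction from the limiting behaviour of \eqref{eq:root_find} (the paper concludes $v(\bar{t}) = v_\text{th}$ using finiteness of $\lim m_n$, whereas you equivalently conclude $m_n \to \infty$ using $v(\bar{t}) < v_\text{th}$ and then contradict boundedness of $M_n$). Your version has the minor merit of explicitly justifying why $m_n$ stays bounded via the exponential envelope in \eqref{eq:Mn}, a point the paper's proof asserts without argument.
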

\begin{proof}
    We have $v(t_0) < v_\text{th}$, so if $t^*$ does not exist then $v(t) < v_\text{th}$ $\forall t > t_0$, and as $m_n > 0$ by Equation \eqref{eq:root_find} we have that $\{t_n\}_{n \in \mathbb{N}}$ is a strictly increasing sequence.

    As $\{t_n\}$ is strictly increasing, either it converges to some limit $t_n \to \overline{t}$ as $n \to \infty$, or it diverges $t_n \to \infty$.  If $t_n \to \overline{t}$, then by the same construction as in Proposition \ref{prop:converges}, $v(\overline{t}) = v_\text{th}$ and $\overline{t} = t^*$, raising a contradiction.  Therefore $t_n \to \infty$ as $n \to \infty$.
\end{proof}

\noindent A pseudocode description of this root-finding algorithm is presented as Algorithm \ref{alg:rootfinding}.

\begin{algorithm}
\caption{Root-finding for firing times.  For a neuron $k \leq N$, returns $t^*_k$ as the next firing time of the neuron (measured from the present time $t_\text{global}$), or returns nothing if the neuron will not fire.}
\label{alg:rootfinding}
\begin{algorithmic}
    \State $v_0 := v_k(t_\text{global}),\ u_0 := u_k(t_\text{global}),\ s_0 := s_k(t_\text{global})$
    \State Calculate $K_1,\ K_2,\ K_3$
    \State $T_k = -\frac{1}{\min\{p, \beta\}} \log \left( \frac{|v_\text{th} - K_3|}{|K_1| + |K_2|} \right)$ \Comment{Construction of solution interval}
    \State $t_0 := t_\text{glob}$ 
    \For{$n \in \mathbb{N}$}
        \State $M_n := (p + |q|)|K_1| e^{-pt_n}  + \beta \min\left\{-K_2 e^{-\beta t_n}, -K_2 e^{-\beta T}\right\}$
        \If{$M_n \leq 0$}
            \State Return $\emptyset$
        \EndIf
        \State $M'_n := \max\left\{(p^2 + |q|)^2 |K_1| e^{-pt_n} + \beta^2 
        \max\left\{ K_2 e^{-\beta t_n}, K_2 e^{-\beta T}\right\}, 0\right\}$
        \State $m_n := \min \bigg\{M_n,\ \frac{1}{2}v'(t_\text{glob} + t_n) + \frac{1}{2} \sqrt{v'(t_\text{glob} + t_n)^2 + 4\big(v_\text{th} - v(t_\text{glob} + t_n) \big) M'_n} \bigg\}$
        \State $t_{n+1} := t_{n} + \frac{v_\text{th} - v(t_\text{glob} + t_n)}{m_n}$
        \If{$t_{n+1} > T_k$}
            \State Return $\emptyset$
        \ElsIf{$t_{n+1} - t_n \leq \epsilon_{converge}$}
            \State Return $t^*_k:= t_{n+1}$
        \EndIf
    \EndFor
\end{algorithmic}
\end{algorithm}

\subsection{Evolving the network}
The above algorithm tells us either that a given neuron with no further inputs will not fire, or how far into the future it will fire.  By applying this to each neuron in a network, we either find that none of the neurons will fire (and so the network returns to a rest state, and we can stop the simulation), or we find the earliest time that a neuron will fire.  In the latter case we can update each neuron's variables $v$, $u$, $s$ to that firing time, reset the firing neuron's voltage to $v_\text{r}$, and respond to the firing by updating the value of $s$ for each other neuron depending upon their connection to the firing neuron as determined by $w$.  By repeating this process, we can evolve the network directly from one firing time to the next.  Taking $t_\text{global}$ as the global timepoint for our model, and for given initial conditions, our algorithm to evolve the network is expressed in pseudocode as Algorithm 2.

In certain cases, such as a planar wave on a domain with dimension 2 or higher and with its wavefront parallel to the neuron lattice, we expect to see many neurons firing simultaneously.  For such situations we can include an additional step in which we consider all neurons with a firing time marginally slower than the fastest firing time to instead fire simultaneously at that fastest firing time, thus allowing us to process all such firing events simultaneously.  Otherwise, finite-precision errors and marginal differences in root-finding results can create small differences in the expected-to-be-equal firing times, forcing the algorithm to loop multiple times as it makes very small steps forwards in model-time.  However, this can potentially suppress slow-growing instabilities or interfere with solutions that genuinely expect slight differences in firing times, so some care should be taken in its use.

\begin{algorithm}
\caption{Global firing event determination and network evolution.  Takes $M$ as the minimum number of firing events to find, and initial conditions $v_k(0)$, $s_k(0)$ for each $k \in \mathbb{N}_N$.  Returns $\mathcal{R}$, a set of all detected firing times as pairs of times and neuron indices firing at that time.}
\label{alg:firingtime}
\begin{algorithmic}
    \State $t_\text{global} := 0$
    \State $\mathcal{R} = \emptyset$ \Comment{Set of recorded spikes}
    \While{$\#\mathcal{R} < M$:}
        \State $\mathcal{F} := \emptyset$, $\mathcal{K} := \emptyset$
        \For{$k \in \mathbb{N}_N$}
            \State Call root-finding algorithm (Algorithm 1) on neuron $k$
        \EndFor
        \If{$\mathcal{F} \neq \emptyset$}
            \State $t = \min\mathcal{F}$
            \For{$k \in \mathbb{N}_N$} \Comment{Optional optimisation for synchronous firing}
                \If{$t < t^*_k \leq (1 + \epsilon_\text{firing})$}
                    \State Include $k$ in $\mathcal{K}$
                \EndIf
            \EndFor
            \For{$k \in \mathbb{N}_N$} \Comment{Updating the network to the next firing time}
                \State Calculate $v_k(T + t)$, $u_k(T + t)$, $s_k(T + t)$ without considering firing
                \If{$k \in \mathcal{K}$}
                    \State $v_k(T + t) := v_\text{th}$
                \EndIf
                \For{$j \in \mathcal{K}, j \neq k$}
                    \State $s_k(T + t) := s_k(T + t) + \Delta \beta w(||\mathbf{x}_j - \mathbf{x}_k||)$
                \EndFor
            \EndFor
            \State $t_\text{global} := t_\text{global} + t$
            \State Record $(t_\text{global}, \mathcal{K})$ in $\mathcal{R}$
        \EndIf
    \EndWhile   
\end{algorithmic}
\end{algorithm}

Beyond explaining our method, we also wish to highlight the ``embarrassingly" parallel nature of this problem.  The only step which requires drawing information from multiple neurons is the comparison to find the earliest firing time.  Otherwise, each neuron can have its own firing time calculated in isolation, and once it receives a timepoint and a list of firing neurons its new variable values can also be calculated in isolation.  To take advantage of this potential for parallelisation, we have implemented the code simulating the network as a parallel program to be run on NVidia graphics cards using the CUDA interface, written in Python \cite{Numba} and available on github at \url{https://github.com/henrydjkerr/Timestep-free-evolver}.  The code was written with modular implementation of most features of the model (such as governing equations, root-finders, variable count and initialisation) to allow variations on the model to be explored easily.

\end{document}